\def\myN{N}
\def\RHPi{\mathrm{\#RH}\Pi_1}
\def\alphabet{\Sigma}
\def\SAT{\textsc{\#Sat}}
\def\APeq{\equiv_\mathrm{AP}}
\def\APred{\leq_\mathrm{AP}}
\def\calE{\mathcal{E}}
\def\calV{\mathcal{V}}
\def\uhTutte{\textsc{UniformHyper}\allowbreak\textsc{Tutte}}
\def\BIS{\textsc{\#BIS}}
\def\ZPotts{Z_\mathrm{Potts}}
\def\ZTutte{Z_\mathrm{Tutte}} 
\def\ZtildeTutte{\widetilde{Z}_\mathrm{Tutte}}
\def\NP{\mathrm{NP}}
\def\RP{\mathrm{RP}}
\def\poly{\mathop{\mathrm{poly}}}
\let\epsilon=\varepsilon
\def\terms{t}
\def\graph{G}
\def\graphvertices{V}
\def\graphedges{E}
\def\hypergraph{H}
\def\hypervertices{\calV}
\def\hyperedges{\calE}
\def\hyperedge{f}
\def\hypervertex{v}
\def\boldgamma{\boldsymbol{\gamma}}  
\def\boldone{\boldsymbol{1}}
 \def\N{\mathbb{N}}
\def\gf2{\mathrm{GF}(2)}
\def\gfq{\mathrm{GF}(q)}
\def\matrix{M}  
\def\field{F}
\def\columns{E}  
\def\numrows{|\rows|}
\def\rows{V}
\def\subsetcols{A}  
\def\matroid{\mathcal{M}}
\def\calM{\mathcal{M}}
\def\rank{r}
\def\column{e}
\def\row{i}
\def\binTutte{\textsc{Binary}\allowbreak\textsc{Matroid}\allowbreak\textsc{Tutte}}
\def\varybinTutte{\textsc{VarBinary}\allowbreak\textsc{Matroid}\allowbreak\textsc{Tutte}}
\def\ZIsing{Z_\mathrm{Ising}}
\def\lineq{\Lambda}
\def\mono{\mathrm{mono}}
\def\goodcol{\mathrm{sat}}
\def\u3hTutte{\textsc{$3$-UniformHyperTutte}}
\def\group{\Gamma}  
\def\points{\nu}
\def\perm{g}
\def\cycles{\mathrm{cyc}}
\def\ZCycleIndex{Z_\mathrm{CI}}
\def\CycleIndex{\textsc{CycleIndex}}
\def\ciparam{x}
\def\WE{\textsc{WE}}
\newtheorem{theorem}{Theorem}
\newtheorem{lemma}[theorem]{Lemma}
\newtheorem{corollary}[theorem]{Corollary}
\begin{document}

\title{Approximating 
the Tutte polynomial of a binary matroid and other related combinatorial polynomials\tnoteref{t1}}
\tnotetext[t1]{This work was partially supported by 
the grant ``Computational Counting'', funded by the \emph{Engineering and Physical Sciences
Research Council}.}

\author[lag]{Leslie Ann Goldberg}

\address[lag]{Department of Computer Science,
University of Liverpool, Ashton Building,
Liverpool L69 3BX, United Kingdom.}

\author[mj]{Mark Jerrum}

\address[mj]{School of Mathematical Sciences\\
Queen Mary, University of London, Mile End Road, London E1 4NS, United Kingdom.}

\begin{abstract}
We consider the problem of approximating certain combinatorial polynomials. 
First, we consider the problem  of approximating the Tutte polynomial of a binary matroid with parameters~$q\geq 2$ and~$\gamma$.  (Relative to the classical $(x,y)$ parameterisation, $q=(x-1)(y-1)$ and $\gamma=y-1$.) 
A graph is a special case of a binary matroid, so earlier work by the authors shows inapproximability (subject to certain  complexity assumptions) for $q>2$, apart from the trivial case $\gamma=0$. 
The situation for $q=2$ is different. 
Previous results for graphs imply inapproximability in the region $-2\leq\gamma<0$, 
apart from at two ``special points'' where the polynomial can be computed exactly in polynomial time. 
For binary matroids, we extend this result by showing
(i)~there is no FPRAS in the region $\gamma<-2$ unless $\mathrm{NP}=\mathrm{RP}$,
and (ii)~in the region $\gamma>0$, the approximation problem is hard for the complexity class $\mathrm{\#RH}\Pi_1$ 
under approximation-preserving (AP) reducibility. 
The latter result indicates a gap in approximation complexity at $q=2$:  
whereas an FPRAS is known in the graphical case, there can be 
none in the binary matroid case, unless there is an FPRAS for all of $\mathrm{\#RH}\Pi_1$. 
The result also implies that it is computationally difficult to approximate the  weight enumerator 
of a  binary linear code, apart from at the special weights at which the problem is exactly solvable in polynomial time. 
As a consequence, we show that approximating the cycle index polynomial of a permutation group is hard 
for $\mathrm{\#RH}\Pi_1$ under AP-reducibility, partially resolving a question that we first posed in~1992.
\end{abstract}

\begin{keyword}Tutte polynomial of a binary matroid\sep weight enumerator of a binary linear code\sep
cycle index polynomial.
\end{keyword}

\maketitle
 
\section{Introduction}
 
 The
multivariate Tutte polynomial (in $q^{-1}$ and~$\boldgamma$)
of a matroid~$\matroid$ with ground set~$\columns$ and rank function $\rank_{\matroid}$
is 
defined as follows (see~\cite[(1.3)]{Sokal05})
\begin{equation}
\label{tildedef}
\ZtildeTutte(\matroid;q,\boldgamma)=\sum_{\subsetcols\subseteq\columns}
q^{-\rank_{\matroid}(\subsetcols)}
\prod_{\column\in\subsetcols}
\gamma_\column,
\end{equation}
where 
$\boldgamma=\{\gamma_\column\}_{\column\in\columns}$.
The tilde in $\ZtildeTutte$ is for consistency with~\cite{Sokal05};
in general we follow the notation used there, so we can conveniently
access various useful identities.

An important class of matroids are the graphic matroids, i.e., those that 
arise as the cycle matroid $\matroid(G)$ of some graph~$G$
(see Section~\ref{sec:binmat} for details).  The Tutte polynomial of graphic 
matroids in particular has received much attention.
For convenience, we offen speak about the
Tutte polynomial of the graph~$G$ rather than the Tutte polynomial of its cycle 
matroid $\matroid(G)$.  Indeed, this polynomial was first defined for graphs, 
and only later generalised to matroids.  

The Tutte polynomial encodes a large quantity of combinatorial information about the 
matroid~\cite{JVW90, Sokal05, welsh}
and the complexity of computing the polynomial has been 
much studied~\cite{thor, tuttepaper, planarTutte, FerroPotts, JVW90,  vertigan}.
One important motivation for this study of the Tutte polynomial is that it includes
as a special 
case (when $q$ is a positive integer)
the problem of computing the partition function of the Potts model 
in statistical physics.
To be more precise about the computational task, 
parameters~$q$ and~$\gamma$ are fixed,
and  the  problem is
to compute $\ZtildeTutte(\matroid;q,\boldgamma)$
for an input matroid~$\matroid$, where~$\boldgamma$ is
the constant function  with $\gamma_e = \gamma$ for every ground set element~$e$.
In order to introduce the topic, we assume in this introduction that the parameters~$q$ and~$\gamma$
are rational, though we shall see below that this can be generalised.

Jaeger, Vertigan and Welsh~\cite{JVW90} investigated the complexity of exactly 
computing~$\ZtildeTutte(\matroid;q,\boldgamma)$ given an input matroid~$\matroid$.
They used a different parameterisation of the polynomial, but the problem that they studied is identical
to the one that we describe here.
For the record, 
they~\cite[(2.2)]{JVW90} define
\begin{equation}
\label{TTutteDef}
T(\matroid;x,y) = \sum_{A\subseteq \columns}
{(x-1)}^{\rank_{\matroid}(\columns)-\rank_{\matroid}(A)} {(y-1)}^{|A|-\rank_{\matroid}(A)}.
\end{equation}
Now substitute  $\gamma = y-1$ and $q=(x-1)(y-1)$.
Let $\boldgamma$ be the constant function  with 
$\boldgamma_\column = \gamma$ for every 
$\column\in\columns$.
Assuming that $q\neq 0$ (hence $\gamma\not=0$), Equation~(\ref{tildedef}) gives
\begin{equation}
\label{eqtrans}
T(\matroid;x,y)=(x-1)^{\rank_{\matroid}(\columns)}\ZtildeTutte(\matroid;(x-1)(y-1),\boldgamma)
=(q/\gamma)^{\rank_{\matroid}(\columns)}\ZtildeTutte(\matroid;q,\gamma).
\end{equation}

Unfortunately, any two-parameter version of the Tutte polynomial
will  omit some points.  
On the one hand, setting $y=1$ in (\ref{TTutteDef}) forces $q=(x-1)(y-1)$ to be~$0$
but setting $\gamma_\column=\gamma=y-1=0$ in (\ref{tildedef}) does not force~$q$ to be~$0$.
On the other hand, the single point $q=0,\gamma=0$ in (\ref{tildedef}) 
corresponds to an entire line 
in the $(x,y)$-coordinate system, where $y=1$ but $x$ can have any value. 
For this reason, it is sometimes convenient \cite[\S2.3]{Sokal05}
to treat the $q\rightarrow 0$ case as a limit
case. We will not need to do this here.

Jaeger et al.\ showed that, even when the input is restricted to be a \emph{graphic} matroid,
exact evaluation is \#P-hard, apart from when $q=1$ and
at four ``special points''. 
The first three of these are 
$(q,\gamma)=(4,-2)$, $(2,-2)$ and $(2,-1)$.
The fourth is the point $(x,y)=(1,1)$ for which Equation~(\ref{eqtrans}) is invalid due to division by~$0$ ---
evaluation at this point corresponding to counting spanning trees in the input graph.
As already noted, the line $\gamma=0$ is also easy in our parameterisation.
Thus, exactly evaluating the Tutte polynomial of a graph is \#P-hard, apart from
when $q=1$ and
at these special points. Jaeger et al.\ also considered the case in which the parameters are complex
numbers, where there are additional special points, but we do not consider this case here.
As they noted, exact evaluation can be done in polynomial time for $q=1$
and it can be done in polynomial time at some of 
the special points for large classes of matroids.
For binary matroids, which are a focus of this paper, the first three special points at least are 
polynomial-time computable. A~definition of binary matroid is given in Section~\ref{sec:binmat}.

Our earlier work~\cite{tuttepaper, FerroPotts} investigates the complexity of
\emph{approximately} computing~$\ZtildeTutte(\matroid;q,\boldgamma)$
when $\matroid$ is restricted to be graphic.
We are interested in determining for which points $(q,\gamma)$ there is a \emph{fully polynomial
randomised approximation scheme} (FPRAS) for the polynomial. An FPRAS is a polynomial-time randomised
approximation algorithm achieving arbitrarily small relative error. A precise definition is provided in Section~\ref{sec:defs}.  We survey the main results now, partly because we build on them
in this article, and partly to highlight the differences in computational complexity
between the graphic and binary cases.
 
For $q>2$ we gave inapproximability results both for~$\gamma<0$ and for~$\gamma>0$.
As already noted, the case $\gamma=0$ is trivial.
In the ``antiferromagnetic'' case $\gamma<0$, we showed~\cite{tuttepaper} that, apart from at the special
point $(q=4,\gamma=-2)$,
there is no FPRAS for approximately evaluating the Tutte polynomial of a graph unless 
$\NP=\RP$.
In the ``ferromagnetic'' case $\gamma>0$, we showed~\cite{FerroPotts} 
that the approximation problem is hard for the logically-defined complexity class $\RHPi$ under 
approximation-preserving ``AP-reductions''.

The complexity class $\RHPi$ 
of counting problems was introduced by 
Dyer,  Goldberg, Greenhill and Jerrum~\cite{APred} as a means  
to classify a wide class of approximate counting problems that were
previously of indeterminate computational complexity.  The problems in   $\RHPi$
are those that 
can be expressed in terms of counting the number of models of a logical formula
from a certain syntactically restricted class which is also known as  ``restricted Krom SNP''~\cite{Dalmau05}.
  $\RHPi$ has a completeness class  with respect to AP-reductions which includes a wide  
range of natural counting problems --- see Section~\ref{sec:defs} for some examples.
Either all of these problems admit an FPRAS,
or none do.  No FPRAS is known for any of them at the time of writing, despite 
much effort having been expended on finding one.  We conjecture that none exists.
Proving counting problems to be hard for $\RHPi$ with respect to AP-reductions is 
similar to working
with the Unique Games Conjecture in the area of approximation algorithms
for optimisation problems, or employing the class PPAD in analysing
the complexity of Nash equilibria.
Since a graphical matroid is a binary matroid, both of the hardness results 
for $q>2$ mentioned earlier 
(for $\gamma<0$ and for $\gamma>0$)
extend to the class of binary matroids.

The paper~\cite{tuttepaper} also includes hardness results for $q<2$
which extend to the binary matroid case.  For example, there is no FPRAS unless $\NP=\RP$ 
if either $\gamma$ or $q/\gamma$ is less than~$-2$. The interested reader is referred to~\cite{tuttepaper}.

The situation is different for $q=2$.
In this case, we showed~\cite{tuttepaper}
that in the region $-2<\gamma<0$ (apart from at special points)
there is no FPRAS for approximately evaluating the Tutte polynomial of a graph unless 
$\NP=\RP$. However,
the most that is known for $\gamma<-2$ (see~\cite{tuttepaper}) is that the problem is as difficult as approximately counting
perfect matchings in a graph, a well-known open problem.
For $\gamma>0$, the Tutte polynomial of a graph \emph{can} be approximated efficiently --- 
Jerrum and Sinclair
have given an FPRAS~\cite{JS93}.
In this paper, we show that the problem of approximating the Tutte polynomial of a binary matroid
is apparently harder.
In particular, we show in Theorem~\ref{thm:binmatroid} that there is no FPRAS 
in the region $\gamma<-2$
unless $\NP=\RP$
and that the problem is hard for $\RHPi$ with respect to AP-reductions for $\gamma>0$.

The results in Theorem~\ref{thm:binmatroid} have
interesting consequences for the problem of approximating related polynomials.
It is well-known that the Tutte polynomial of a binary matroid 
contains as a special case
the weight enumerator
of a binary linear code, which will be defined in Section~\ref{sec:we}. Hence, we immediately get a complexity classification (Corollary~\ref{cor:we}) for this problem.
This, in turn, allows us to make progress on a long-standing open problem about the complexity
of approximating the cycle index polynomial of a permutation group (see Section~\ref{sec:ci}).
We had previously shown~\cite{CIpaper} that there is no FPRAS for this problem, unless 
$\NP=\RP$, if the parameter, $\ciparam$,
is a non-integer.
Using our result about the weight enumerator of a  binary linear code, we  show that the cycle index
polynomial is  as difficult to approximate as $\RHPi$
 for every parameter value $\ciparam>1$ (Corollary~\ref{cor:ci}).
As we will explain in Section~\ref{sec:ci}, it is at the integer points that the cycle index polynomial has combinatorial meaning.

\subsection{Matroid preliminaries}\label{sec:matroid}
A matroid $\calM$ is a combinatorial structure  defined  by a 
set $E$ (the ``ground set'') 
together with a ``rank function'' $r_{\calM}:E\to\N$. The
rank function satisfies the following conditions  for
all subsets $A,B\subseteq E$: 
(i)~$0\leq r_{\calM}(A)\leq |A|$, 
(ii)~$A\subseteq B$ implies $r_{\calM}(A)\leq r_{\calM}(B)$ (monotonicity), 
and (iii)~$r_{\calM}(A\cup B)+r_{\calM}(A\cap B)
\leq r_{\calM}(A)+r_{\calM}(B)$ (submodularity).  

A subset $A\subseteq E$ satisfying $r_{\calM}(A)=|A|$
is said to be {\it independent}.
Every other subset $A\subseteq E$ is said to be {\it dependent.}
A  maximal (with respect to inclusion)
independent set is   a {\it basis}, and a minimal dependent set is   a {\it circuit}.
A~circuit with one element is  a {\it loop}.

Suppose that $\calM$   is a matroid with ground set~$E$.
Then $\calM$ is associated with  a {\it dual matroid\/}~$\calM^*$
with the same ground set $E$ but rank function $r_{\calM^*}$ 
given by $r_{\calM^*}(A)=|A|+r_{\calM}(E-A)-r_{\calM}(E)$.
A {\it cocircuit\/} in $\calM$ is a set that is a circuit in~$\calM^*$;
equivalently, a cocircuit is a minimal set that intersects every basis.
A~cocircuit with one element is a {\it coloop}.

We will use the matroid operations \emph{contraction} and \emph{deletion}. Suppose 
$e \in E $ is a member of the ground set of matroid~$\calM$.  
The {\it contraction
$\calM/e$ of $e$ from $\calM$} is the matroid on ground set $E-\{e\}$ with 
rank function given by $r_{\calM/e}(A)=r_\calM(A\cup \{e\})-r_\calM(\{e\})$, for
all $A\subseteq E-\{e\}$.
The {\it deletion
$\calM\backslash e$ of~$\{e\}$ from $\calM$} is the matroid on ground set $E-\{e\}$ with 
rank function given by $r_{\calM \backslash e}(A)=r_\calM(A)$, for
all $A\subseteq E-\{e\}$.   
We refer the reader to Oxley's book~\cite{oxley}
for a thorough exposition of   matroid theory.

\subsection{The Tutte polynomial of a binary matroid}
\label{sec:binmat}

Let $\matrix$ be a matrix over a field $\field$ with 
row set~$\rows$ and column set~$\columns$.
$\matrix$ ``represents'' a matroid~$\matroid$ with ground set $\columns$.
The rank  $\rank_{\matroid}(\subsetcols)$ of a set of columns~$\subsetcols$
in the matroid
is defined to be the rank of the submatrix consisting of those columns.  It is easy
to see (see \cite{oxley}) that a rank function defined in this way satisfies the three conditions 
(i)--(iii) for a matroid rank function presented in the previous subsection.
Therefore, 
a set $\subsetcols\subseteq \columns$ is
dependent in the matroid
if and only if  the columns in~$\subsetcols$ are linearly dependent as vectors.
A matroid is said to be \emph{representable} over the field~$\field$ if it can be represented in this way.
It is said to be \emph{binary} if  it is representable over~$\gf2$.

The \emph{cycle matroid} of an undirected graph $\graph=(\graphvertices,\graphedges)$
is the matroid $\matroid(\graph)$ represented by the vertex-edge
incidence matrix~$\matrix$ of~$\graph$.  In this case,
$\rank_{\matroid(\graph)}(\subsetcols) = |\graphvertices| - \kappa(\graphvertices,\subsetcols)$
where $\kappa(\graphvertices,\subsetcols)$ is the number of connected components of
the graph $(\graphvertices,\subsetcols)$.
We simplify notation by writing $\ZtildeTutte(\graph;q,\boldgamma)$
in place of $\ZtildeTutte(\matroid(\graph);q,\boldgamma)$.  
Since the Tutte polynomial of
a binary matroid generalises the Tutte polynomial of a graph, any hardness result for the 
latter immediately translates to the former.   In this context, it should be noted that 
there is a slight mismatch between the definition of the Tutte polynomial 
given here, and the one used in the papers we cite, e.g., \cite{FerroPotts, Sokal05}.
There, the Tutte polynomial of a graph is defined using the ``random cluster'' formulation:
\begin{equation}\label{eq:randClust}
\ZTutte(\graph;q,\boldgamma)=\sum_{\subsetcols\subseteq E}
q^{\kappa(\graphvertices,\subsetcols)}
\prod_{\column\in\subsetcols}
\gamma_\column=q^{|V|}\,\ZtildeTutte(\graph;q,\boldgamma).
\end{equation}
(Note the absence of a tilde!)
But since the two formulations differ only by an easily-computable factor $q^{|V|}$,
all complexity results, whether about approximate or exact computation,
translate directly.

For fixed real numbers $q$ and $\gamma$  we define
the following computational problem, which is parameterised by~$q$ and~$\gamma$.

\begin{description}
\item[Problem] $\binTutte(q,\gamma)$.
\item[Instance] A matrix $\matrix$ over $\gf2$ with rows $\rows$ and 
columns $\columns$   representing a binary matroid~$\matroid$.
\item[Output]  $\ZtildeTutte(\matroid;q,\boldgamma)$,
where
$\boldgamma$ is the constant function  with $\boldgamma_\column = \gamma$ for every $\column\in\columns$.
\end{description}

\subsection{Standard definitions: approximation schemes and approx\-imation-preserving reductions}
\label{sec:defs}

We are interested in the complexity of \emph{approximately} solving the problem $\binTutte(q,\gamma)$.
We start with the relevant definitions. The
reader who is already familiar with the complexity of
approximate counting can skip this section. We use the presentation from~\cite{FerroPotts}. 
 
A \emph{randomised approximation scheme\/} is an algorithm for
approximately computing the value of a function~$f:\Sigma^*\rightarrow
\mathbb{R}$.
The
approximation scheme has a parameter~$\varepsilon>0$ which specifies
the error tolerance.
A \emph{randomised approximation scheme\/} for~$f$ is a
randomised algorithm that takes as input an instance $ x\in
\alphabet^{\ast }$ (e.g., for the problem $\binTutte(q,\gamma)$, the
input would be a matrix~$\matrix$ over~$\gf2$
representing a binary matroid~$\matroid$) and a rational error
tolerance $\varepsilon >0$, and outputs a rational number $z$
(a random variable depending on the ``coin tosses'' made by the algorithm)
such that, for every instance~$x$,
$ 
 \Pr \big[e^{-\epsilon} f(x)\leq z \leq e^\epsilon f(x)\big]\geq \tfrac{3}{4}$.
The randomised approximation scheme is said to be a
\emph{fully polynomial randomised approximation scheme},
or \emph{FPRAS},
if it runs in time bounded by a polynomial
in $ |x| $ and $ \epsilon^{-1} $.
As in~\cite{FerroPotts}, we say that a real number~$z$ is \emph{efficiently approximable} if there is an FPRAS for the constant function $f(x)=z$.

Our main tool for understanding the relative difficulty of
approximation counting problems is \emph{approximation-preserving reductions}.
We use
Dyer, Goldberg, Greenhill and Jerrum's notion of
approximation-preserving reduction~\cite{APred}.
Suppose that $f$ and $g$ are functions from
$\alphabet^{\ast }$ to~$\mathbb{R}$. An ``approximation-preserving
reduction'' from~$f$ to~$g$ gives a way to turn an FPRAS for~$g$
into an FPRAS for~$f$. Here is the definition. An {\it approximation-preserving reduction\/}
from $f$ to~$g$ is a randomised algorithm~$\mathcal{A}$ for
computing~$f$ using an oracle for~$g$. The algorithm~$\mathcal{A}$ takes
as input a pair $(x,\varepsilon)\in\alphabet^*\times(0,1)$, and
satisfies the following three conditions: (i)~every oracle call made
by~$\mathcal{A}$ is of the form $(w,\delta)$, where
$w\in\alphabet^*$ is an instance of~$g$, and $0<\delta<1$ is an
error bound satisfying $\delta^{-1}\leq\poly(|x|,
\varepsilon^{-1})$; (ii) the algorithm~$\mathcal{A}$ meets the
specification for being a randomised approximation scheme for~$f$
(as described above) whenever the oracle meets the specification for
being a randomised approximation scheme for~$g$; and (iii)~the
run-time of~$\mathcal{A}$ is polynomial in $|x|$ and
$\varepsilon^{-1}$.

If an approximation-preserving reduction from $f$ to~$g$
exists we write $f\APred g$, and say that {\it $f$ is AP-reducible
  to~$g$}.
Note that if $f\APred g$ and $g$ has an FPRAS then $f$ has an FPRAS\null.
(The definition of AP-reduction was chosen to make this true).
If $f\APred g$ and $g\APred f$ then we say that
{\it $f$ and $g$ are AP-interreducible}, and write $f\APeq g$.
A word of warning about terminology:
Subsequent to~\cite{APred}, the notation $\APred$ has been
used
to denote a different type of approximation-preserving reduction which applies to
optimisation problems.
We will not study optimisation problems in this paper, so hopefully this will
not cause confusion.

Dyer et al.~\cite{APred} studied counting problems in \#P and
identified three classes of counting problems that are interreducible
under approx\-imation-preserving reductions. The first class, containing the
problems that admit an FPRAS, are trivially AP-interreducible since
all the work can be embedded into the reduction (which declines to
use the oracle). The second class  is the set of problems that are
AP-interreducible with \SAT, the problem of counting
satisfying assignments to a Boolean formula in CNF\null.
Zuckerman~\cite{zuckerman}
has shown that \SAT{} cannot have an FPRAS unless
$\mathrm{RP}=\mathrm{NP}$. The same is obviously true of any problem
 to which \SAT{} is AP-reducible.  
 
The third class appears to be of intermediate complexity.
It contains   all of the counting problems
expressible in a certain logically-defined complexity class, $\RHPi$. Typical
complete problems include counting the downsets in a partially ordered
set~\cite{APred},
computing the partition function of the ferromagnetic Ising model with
local external magnetic fields~\cite{ising},
and counting the independent sets in a bipartite graph,
which is defined as follows.

\begin{description}
\item[Problem] $\BIS$.
\item[Instance] A bipartite graph $B$.
\item[Output]  The number of independent sets in $B$.
\end{description}

In \cite{APred} it was shown that \BIS\ is complete for the
logically-defined
complexity class $\mathrm{\#RH}\Pi_1$  with respect to approximation-preserving
reductions.
We conjecture that there is no FPRAS for \BIS.
 
\section{Approximating the Tutte polynomial of a binary matroid}

This section provides the proof of the following Theorem.

\begin{theorem} 
\label{thm:binmatroid}
Suppose that $q\geq 2$ and $\gamma$ are efficiently approximable.
\begin{enumerate}
\item \label{itemspecial}
If $\gamma=0$, or if $(q,\gamma)$ is one of the special points $(4,-2)$, $(2,-2)$ or $(2,-1)$,
then $\binTutte(q,\gamma)$ can be solved exactly in polynomial time.
\item \label{itemantiferropotts}
Otherwise, if  $\gamma < 0$ then there is no FPRAS for 
$\binTutte(q,\gamma)$ unless $\NP=\RP$. 
\item \label{itemnew}
If $\gamma>0$ then $\BIS\APred \binTutte(q,\gamma)$.
\end{enumerate}
\end{theorem}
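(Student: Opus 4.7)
The theorem has three parts. I plan to handle them separately, noting that Parts~(1) and~(2) combine standard facts with the graph results of~\cite{tuttepaper, FerroPotts}, while Part~(3) requires a genuinely new reduction. For Part~(1), the case $\gamma=0$ collapses (\ref{tildedef}) to $q^{-\rank_{\matroid}(\emptyset)}=1$. At each of the three special points $(q,\gamma)\in\{(4,-2),(2,-2),(2,-1)\}$, I would invoke Greene's theorem~\cite{Greene} to express $\ZtildeTutte(\matroid;q,\gamma)$ as an evaluation of the weight enumerator of the binary linear code associated with~$\matroid$; at those three points the evaluation reduces to an elementary invariant such as $2^{\dim(C\cap C^\perp)}$ which is efficiently computable by Gaussian elimination over~$\gf2$. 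For Part~(2), the cases $q>2$ with $\gamma<0$ (away from $(4,-2)$) and $q=2$ with $-2\le\gamma<0$ (away from $(2,-1)$ and $(2,-2)$) are inherited verbatim from~\cite{tuttepaper} because the cycle matroid of a graph is binary. Similarly, the case $q>2,\gamma>0$ in Part~(3) follows from~\cite{FerroPotts}.

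\textbf{Part~(2), new case $q=2,\gamma<-2$.} Here I would exploit matroid duality, which is available in the binary but not the purely graphic setting. Given a graph~$G$, the cocycle matroid $\matroid(G)^*$ is binary and a representing matrix over~$\gf2$ can be computed in polynomial time from any basis of the cycle space of~$G$. Tutte-polynomial duality~\cite{oxley} then yields
\begin{equation*}
\ZtildeTutte(\matroid(G)^*;q,\gamma) = c(G,q,\gamma)\,\ZtildeTutte(\matroid(G);q,q/\gamma)
\end{equation*}
for an efficiently computable factor $c(G,q,\gamma)$. Setting $q=2$, the hypothesis $\gamma<-2$ places the dual parameter $2/\gamma$ strictly inside $(-1,0)$, safely away from the special point $\gamma'=-1$ (which would arise only at $\gamma=-2$). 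Since approximating $\binTutte(2,\gamma')$ on graphs is NP-hard throughout this region by~\cite{tuttepaper}, composing with the duality identity produces the required AP-reduction from a known NP-hard problem to $\binTutte(2,\gamma)$.

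\textbf{Part~(3), new case $q=2,\gamma>0$.} Matroid duality does not help here, because $q=2,\gamma>0$ maps under $\gamma\mapsto 2/\gamma$ back to the same region, where the graph problem admits an FPRAS~\cite{JS93}. So the reduction must exploit binary matroids that are not (co)graphic. I plan to reduce $\BIS$ to the problem of evaluating the weight enumerator of a suitable binary linear code; by Greene's theorem this is AP-interreducible with $\binTutte(2,\gamma)$ at the corresponding parameter. Given a bipartite graph $B=(U\cup V,E)$, I would construct a matrix $\matrix_B$ over~$\gf2$ whose columns correspond to vertices and edges of~$B$, together with auxiliary slack columns encoding each edge constraint $x_u+x_v+s_e=0$; the weight enumerator of the resulting code is then a sum over vertex subsets~$S\subseteq U\cup V$ with terms proportional to $\gamma^{|S|+|\partial S|}$, where $\partial S$ is the set of edges of $B$ with exactly one endpoint in~$S$. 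An alternative route, which I would pursue if the direct construction resists clean analysis, is to reduce first from ferromagnetic Ising with local external magnetic fields (which is $\BIS$-equivalent by~\cite{ising}) and then simulate the varying per-vertex fields by parallel-column duplication: $k$ parallel columns of uniform weight $\gamma$ behave like a single column of effective weight $(1+\gamma)^k-1$, yielding polynomially many distinct effective weights from a single oracle at parameter~$\gamma$.

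\textbf{Main obstacle.} The technical crux is the Part~(3) construction. The core difficulty is that the independent-set constraint ``at most one endpoint per edge is selected'' is not $\gf2$-linear, so introducing it via slack variables inevitably creates codewords that do not correspond to independent sets. The plan is to cancel these spurious contributions either by careful weight amplification (placing extra $(1+\gamma)^k$-style penalties on forbidden configurations via parallel duplication on the slack columns) or by polynomial interpolation across the polynomially many distinct effective weights obtained from parallel duplication with varying~$k$. Ensuring that the entire chain is approximation-preserving --- polynomial slackness in the relative-error parameter $\varepsilon^{-1}$ and only polynomially many oracle calls --- is the property that requires the most care, and I expect the bulk of the technical effort to lie in verifying it.
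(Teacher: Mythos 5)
Your treatment of Parts~(1) and~(2) is essentially the paper's: Part~(1) is a known exact-computation fact (the paper cites Jaeger--Vertigan--Welsh; your linear-algebraic justification at the three special points is fine), and your Part~(2) argument for $q=2$, $\gamma<-2$ --- pass to the dual binary matroid, whose representation is efficiently computable, and use the duality identity to land at the parameter $2/\gamma\in(-2,0)$ where graph hardness from the earlier paper applies --- is exactly the paper's argument.

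The gap is in Part~(3) for $q=2$, $\gamma>0$, which is precisely the new content of the theorem, and neither of your two routes can be made to work. Your alternative route (reduce from ferromagnetic Ising with local fields, then implement varying weights by parallel duplication) founders on an expressibility obstruction: in the binary-matroid Ising formulation a column is a \emph{homogeneous} $\gf2$ parity constraint, so a single-$1$ column can only impose the condition $\sigma(v)=0$, i.e.\ a field favouring one fixed spin; a field favouring the opposite spin is an affine condition and cannot be realised by any gadget of positive-weight parity columns, because the all-zero configuration satisfies every parity check and is therefore always maximally favoured. But ferromagnetic Ising with \emph{consistent} fields admits an FPRAS (Jerrum--Sinclair), and the $\BIS$-equivalence of the Ising problem you invoke requires \emph{inconsistent} fields --- exactly what you cannot embed. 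Your direct route has the same problem one level down: the quantity you say the construction produces, $\sum_S\gamma^{|S|+|\partial S|}$, is itself a consistent-field ferromagnetic Ising partition function and hence FPRAS-able, so it cannot carry $\BIS$-hardness; moreover the proposed repairs do not help, since positive-weight ``penalties'' can only reward satisfied parity checks (there is no mechanism to cancel spurious codewords), and polynomial interpolation is not approximation-preserving, so it cannot be used inside an AP-reduction.

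The missing idea is to route through the ferromagnetic Potts/Tutte polynomial of \emph{uniform hypergraphs}: $\BIS\APred\uhTutte(2,1)$ is already known, and the paper's key new step is a randomised simulation of each hyperedge's ``all vertices equal'' constraint by $\gf2$ parity constraints. For each hyperedge $f$ one introduces $N$ columns, each the indicator vector of a uniformly random \emph{even-sized} subset of $f$: a monochromatic assignment satisfies all $N$ of these parity checks, while any non-monochromatic assignment satisfies each independently with probability $\tfrac12$, hence about $N/2$ of them by a Chernoff bound and a union bound over all $2^n$ assignments and $m$ hyperedges. Calibrating the weight to $\gamma'=2^{2/N}-1$ makes the satisfied-check count translate into a factor of essentially $4$ versus $2$ per hyperedge, i.e.\ a relative factor $2^{\mathrm{mono}(\sigma)}$, recovering $\ZPotts(\hypergraph;2,\boldone)$ up to controlled error. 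Since $\gamma'$ depends on the input, one then needs a second reduction implementing this weight from the fixed $\gamma>0$ by series and parallel extensions, together with a verification (absent from your plan) that such extensions can be performed within the class of binary matroids and satisfy the side conditions needed for Sokal's series/parallel identities. Without this simulation of higher-arity all-equal constraints by parity constraints, the step from $\BIS$ to $\binTutte(2,\gamma)$ remains unproved.
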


Some of the parts of the theorem follow
from our earlier work in~\cite{tuttepaper} and~\cite{FerroPotts}.
The main new result is Item~(\ref{itemnew}).
Its proof follows from
\begin{enumerate}
\item the AP-reduction from $\BIS$ to  
$\uhTutte(2,1)$ from our paper~\cite{FerroPotts},
\item an AP-reduction from $\uhTutte(2,1)$
to the problem of computing the Tutte polynomial of a binary matroid, where the 
values~$\gamma_e$ depend on the input (Lemma~\ref{lem:binmatroid}), and
\item implementation of these values~$\gamma_e$ using
series-parallel extensions on binary matroids (Lemma~\ref{lem:shift}).
\end{enumerate}
The details are in the following sections.

\subsection{The Tutte polynomial of a uniform hypergraph}

We have seen one possible generalisation of the Tutte polynomial of a graph,
namely to binary matroids.  Another natural generalisation
is to hypergraphs.  The two generalisation are different, but the 
relationship between them is interesting, and and will be exploited
in one of our reductions.

It is typical to define a hypergraph as a pair $(\hypervertices,\hyperedges)$
in which $\hypervertices$ is a set of vertices, and $\hyperedges$ is a set of non-empty 
subsets of $\hypervertices$, called hyperedges.
For our work on the Tutte polynomial, it will be more convenient
to extend this definition. 
Thus, we will use the term ``hypergraph'' to refer to a pair
$(\hypervertices, \hyperedges)$ in which $\hypervertices$ is a set of vertices,
and $\hyperedges$ is a \emph{multiset} of
non-empty 
subsets of $\hypervertices$, called hyperedges.
The reason that the collection $\hyperedges$ of hyperedges is a multiset, rather than a set,
is that it is useful for the Tutte polynomial to allow ``parallel'' edges so that certain
operations, such as parallel extensions, which we shall define below, can be freely applied~\cite{Sokal05}.
A hypergraph is {\it uniform\/} if all hyperedges have the same cardinality.

Let $\hypergraph=(\hypervertices,\hyperedges)$ be a hypergraph.
The multivariate Tutte polynomial of $\hypergraph$ 
was studied (under a different name) by Grimmett~\cite{Grimmett}. A definition can be found, for
example, in~\cite{FerroPotts}. 
In this paper we will use the Potts model version.
Suppose that~$q$ is a positive integer 
and that $\boldgamma=\{\gamma_\hyperedge\}_{\hyperedge\in\hyperedges}$. 
Let 
\begin{equation}\label{eq:HyperPottsDefn}
\ZPotts(\hypergraph;q,\boldgamma) = 
\sum_{\sigma:\hypervertices\rightarrow [q]}\,
\prod_{\hyperedge\in\hyperedges}
\big(1+\gamma_\hyperedge\delta(\{\sigma(\hypervertex) \mid \hypervertex\in \hyperedge\})\big),
\end{equation}
where $[q]=\{0,\ldots,q-1\}$ is a set of $q$~spins or colours,
and $\delta(S)$ is~$1$ if its argument is a singleton and 0 otherwise.
Identity (\ref{eq:HyperPottsDefn}) extends the Tutte polynomial from 
graphs to hypergraphs, but only for positive integer~$q$. 
It is possible to provide a formulation for general $q$ along the 
lines of~(\ref{eq:randClust}), but this is not needed in what follows.

We consider the following computational problem,
\begin{description}
\item[Problem] $\uhTutte(q,\gamma)$.
\item[Instance]  A \emph{uniform} hypergraph $\hypergraph=(\hypervertices,\hyperedges)$.
\item[Output]  $\ZPotts(\hypergraph;q,\boldgamma)$,
where $\boldgamma$ is the constant function  
 with $\boldgamma_\hyperedge = \gamma$ for every $\hyperedge\in\hyperedges$.
\end{description}

The will use the following lemma, which is an easy consequence
of the results of~\cite{FerroPotts}.
 
\begin{lemma}  
\label{lem:BIS}
$\BIS\APred\uhTutte(2,1)$.
\end{lemma}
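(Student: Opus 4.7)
My plan is to obtain Lemma~\ref{lem:BIS} as a direct extraction from the AP-reductions constructed in~\cite{FerroPotts}. That paper establishes $\BIS$-hardness of approximating the partition function of the ferromagnetic Potts model in various parameter regimes, and its reductions naturally factor through hypergraph versions of the Potts model: a hyperedge of size $k$ carrying a ferromagnetic weight behaves like a ``super-edge'' coupling all $k$ endpoints, which can be used to simulate large effective values of $q$. Since $\uhTutte(2,1)$ is precisely the ferromagnetic $2$-state Potts partition function on uniform hypergraphs with uniform hyperedge weight $\gamma=1$, one of these intermediate reductions should yield the claim.

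Concretely, I would proceed in two steps. First, identify the intermediate lemma in~\cite{FerroPotts} that AP-reduces $\BIS$ to computing $\ZPotts(H;2,\gamma_0)$ on $k$-uniform hypergraphs for some $k\geq 3$ and some $\gamma_0>0$. Second, if $\gamma_0\neq 1$, adjust the parameter using a standard parallel-composition gadget: replacing each hyperedge by a bundle of $\ell$ parallel copies transforms the factor $(1+\gamma_0)$ into $(1+\gamma_0)^\ell$ in~(\ref{eq:HyperPottsDefn}). Varying $\ell$ (and, if necessary, combining with series-type subdivisions that introduce auxiliary vertices) lets one tune the effective weight to exactly $1$, or to approximate it to within the relative tolerance that an AP-reduction requires. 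Because the same gadget is applied to every hyperedge, uniformity is preserved throughout.

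The main potential obstacle is checking that $\gamma=1$ is either directly delivered by~\cite{FerroPotts} or reachable by these adjustment gadgets; this reduces to a short calculation about parallel/series composition of ferromagnetic Potts interactions on hypergraphs, in the spirit of~\cite{Sokal05}. Since such manipulations are well understood in the ferromagnetic regime $\gamma>0$, and since~\cite{FerroPotts} provides considerable flexibility in how the hypergraph side is set up, no substantial new difficulty arises, which is why the lemma can reasonably be described as an ``easy consequence'' of that earlier paper.
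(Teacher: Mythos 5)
Your primary route coincides with the paper's: Lemma~\ref{lem:BIS} is proved there simply by citing the intermediate results of \cite{FerroPotts} (Observation~2 together with Lemmas~14 and~15), which already give an AP-reduction from \BIS{} to the Potts/Tutte partition function of a \emph{uniform} hypergraph at exactly $q=2$ and uniform hyperedge weight $\gamma=1$, i.e.\ to $\uhTutte(2,1)$; the only extra point made in the paper is the definitional check that the hypergraph built in \cite{FerroPotts} (where a hyperedge may itself be a multiset of vertices) conforms to the present notion of hypergraph. So in the case that actually occurs ($\gamma_0=1$) your argument is correct and is essentially the paper's. Be warned, however, that your contingency plan for $\gamma_0\neq 1$ would not work if it were needed: to deduce $\BIS\APred\uhTutte(2,1)$ from $\BIS\APred\uhTutte(2,\gamma_0)$ you must implement weight $\gamma_0$ out of weight-$1$ hyperedges (your formula $(1+\gamma_0)^{\ell}$ goes in the opposite direction), and parallel bundles of weight-$1$ hyperedges only realise the discrete effective weights $2^{\ell}-1$, while there is no hypergraph analogue of series composition --- subdividing a hyperedge of size $\geq 3$ obeys no two-factor product law of the kind in \cite{Sokal05} and would disturb uniformity --- so a generic $\gamma_0$ can be neither hit exactly nor approximated to the required relative accuracy by these gadgets. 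Fortunately that fallback is never invoked, since \cite{FerroPotts} delivers $\gamma=1$ directly.
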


\begin{proof}
This follows from Observation~2, Lemma~14 and
Lemma~15 of~\cite{FerroPotts}.
We note that \cite{FerroPotts} stated a more general definition of ``hypergraph'' in 
which hyperedges were taken to be multisets, rather than sets. Nevertheless,
the construction in Lemma~15 actually produces a hypergraph that conforms to the definition that
we use here.
\end{proof}

\subsection{A Potts model characterisation}

Just as we used the Potts model version of the
multivariate Tutte polynomial of a hypergraph,
it will be helpful to have a representation of $\ZtildeTutte(\matroid;q,\boldgamma)$ in terms of the
(multivariate) partition function of the Potts model. See~\cite[Theorem 3.1]{Sokal05}.
Let $\matroid$ be a matroid represented by a 
matrix
$\matrix$ over $\gfq$ with rows $\rows$ and columns $\columns$.
For every column~$\column\in\columns$, let
$\lineq_\column$
be the linear equation 
$\sum_{\row\in \rows} \matrix_{\row,\column}\sigma(\row) = 0$
where the arithmetic is in~$\gfq$. 
The Potts partition function
of $\matroid$ is
defined as follows:
\begin{equation}\label{eq:Pottspfdefn}
\ZPotts(\matroid;q,\boldgamma) = 
\sum_{\sigma:\rows\rightarrow[q]}\,
\prod_{\column\in\columns}
(1+\gamma_\column\delta_\column(\sigma)),
\end{equation}
where  
$$\delta_\column(\sigma)
=\begin{cases}
1, & \text{if $\sigma$ satisfies $\lineq_\column$,} \\
0, & \text{otherwise.}
\end{cases}
$$
Also, let
$\ZIsing(\matroid;\boldgamma)$ be a synonym for $\ZPotts(\matroid;2,\boldgamma)$. The Ising
model is the special case $q=2$ of the Potts model.
Note that
\begin{align*}
\ZIsing(\matroid;\boldgamma) & =  
\sum_{\sigma:\rows\rightarrow \{0,1\}}
\sum_{\subsetcols\subseteq\columns}\,
\prod_{\column\in\subsetcols}
\gamma_\column\delta_\column(\sigma)\\
& = \sum_{\subsetcols\subseteq\columns}\,
\gamma_\subsetcols
\sum_{\sigma:\rows\rightarrow \{0,1\}}\,
\prod_{\column\in\subsetcols} \delta_\column(\sigma),
\end{align*}
where $\gamma_\subsetcols=\prod_{\column\in\subsetcols} \gamma_\column$.
The number of configurations~$\sigma:\rows\rightarrow \{0,1\}$ 
for which $\prod_{\column\in\subsetcols} \delta_\column(\sigma)=1$
is the number of solutions to the system of linear equations 
$\lineq_\subsetcols = 
\left\{ \lineq_\column  \mid \column \in \subsetcols
\right\}$, which we denote $\#\lineq_\subsetcols$.
Thus, 
\begin{align}
\label{eq:fkising}
\ZIsing(\matroid;\boldgamma) &= 
 \sum_{\subsetcols\subseteq\columns}
\gamma_\subsetcols\,\#\lineq_\subsetcols\\ \nonumber
&=\sum_{\subsetcols\subseteq\columns}
\gamma_\subsetcols\,
2^{\numrows-\rank_{\matroid}(\subsetcols)}\\ \nonumber
  &= 2^{|V|}\,\ZtildeTutte(\matroid;2,\boldgamma).
\end{align}

It is interesting to compare definitions (\ref{eq:HyperPottsDefn}) and~(\ref{eq:Pottspfdefn})
in the case $q=2$ to see how they both arise as natural generalisations 
of the Ising partition function of  a graph.
In the classical Ising model on a graph, each edge $(u,v)$ contributes a factor 
depending on whether $\sigma(u)$ and $\sigma(v)$ are equal.  If we think of this condition as 
asserting that the edge $(u,v)$ is monochromatic, then the extension~(\ref{eq:HyperPottsDefn})
to hypergraphs is immediate.  On the other hand, we can equally think of the same condition
as asserting $\sigma(u)+\sigma(v)=0\pmod2$, 
which leads us naturally to
definition~(\ref{eq:Pottspfdefn}) 
for binary matroids.

\subsection{Reduction from \textmd{$\uhTutte(2,\gamma)$}}

Consider the following computational problem, which is similar to~$\binTutte(q,\gamma)$
except that the weight~$\gamma$ is part of the input.

\begin{description}
\item[Problem] $\varybinTutte(q)$.
\item[Instance] A matrix $\matrix$ over $\gf2$ with rows $\rows$ and 
columns $\columns$   representing a binary matroid~$\matroid$.
A positive integer~$\myN$, given in unary.
\item[Output]  $\ZtildeTutte(\matroid;q,\boldgamma)$,
where
$\boldgamma$ is the constant function  with $\boldgamma_\column =  2^{2/\myN}-1$ for every $\column\in\columns$.
\end{description}

\begin{lemma}
$\uhTutte(2,1) \APred \varybinTutte(2)$.  
\label{lem:binmatroid}
\end{lemma}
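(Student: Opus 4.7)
The plan is to construct, from a uniform hypergraph $H=(\mathcal{V},\mathcal{E})$ with hyperedges of common size~$k$, a binary matroid $\matroid$ and a positive integer~$N$ such that the binary-matroid Ising partition function $\ZIsing(\matroid;\boldgamma)$ with uniform weight $\gamma=2^{2/N}-1$ equals $\ZPotts(H;2,\boldone)$ multiplied by an easily-computable factor depending only on $|\mathcal{E}|$ and~$N$. The essential tool is the Potts formulation~(\ref{eq:Pottspfdefn}) of $\ZIsing(\matroid;\boldgamma)$: each column~$\column$ of the representing matrix contributes a factor of $1+\gamma$ to every configuration $\sigma\colon\rows\to\{0,1\}$ satisfying the column's $\gf2$-linear equation~$\lineq_\column$, and a factor of~$1$ otherwise. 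The reduction would attach to each hyperedge $f=\{v_1,\ldots,v_k\}$ a gadget consisting of a block of columns (and, for $k\ge 4$, a few auxiliary rows) designed so that, once the auxiliary rows are summed out, the gadget's contribution factors as a $\sigma|_f$-independent constant times $1+[\sigma|_f\text{ monochromatic}]$, matching the per-hyperedge factor in $\ZPotts(H;2,\boldone)$.

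The workhorse is the \emph{triangle gadget}, which handles the case $k=3$ cleanly: attach three columns to $f$, one for each pair-equation $\sigma(v_i)+\sigma(v_j)=0$. The number of satisfied pair-equations equals~$3$ in the monochromatic case and exactly~$1$ in each of the six non-monochromatic cases, so the gadget contributes $(1+\gamma)^3$ on monochromatic $\sigma|_f$ and $(1+\gamma)^1$ on non-monochromatic $\sigma|_f$. Taking $N=4$, so that $\gamma=2^{1/2}-1$ and $(1+\gamma)^2=2$, the gadget's contribution factors as $\sqrt{2}\bigl(1+[\sigma|_f\text{ monochromatic}]\bigr)$. Summed over all configurations, this yields $\ZIsing(\matroid;\gamma) = 2^{|\mathcal{E}|/2}\,\ZPotts(H;2,\boldone)$, so an oracle call to $\varybinTutte(2)$ recovers $\ZPotts(H;2,\boldone)$ with the same relative precision up to an absorbable multiplicative constant. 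The case $k=2$ is similar and simpler: each hyperedge is represented by a single pair-column, and $N=2$ makes $\gamma=1$.

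The main obstacle is the case $k\ge 4$, where the triangle-style construction fails because the number of satisfied pair-equations on a non-monochromatic $\sigma|_f$ depends on how the colour classes split~$f$. My plan would be to preprocess: introduce $k-3$ fresh auxiliary rows per hyperedge and replace each $k$-hyperedge by a symmetric collection of overlapping $3$-element gadgets (one chain per partition of $f$ into two non-empty parts) whose joint monochromaticity condition, once the auxiliary rows are summed out, is equivalent to the monochromaticity of~$f$. The symmetry across partitions is what I would rely on to show that the per-hyperedge contribution depends on $\sigma|_f$ only through $[\sigma|_f\text{ monochromatic}]$, reducing the problem to the $3$-uniform case handled above. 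Once this invariance is established, the remainder of the proof is routine: rescale the oracle's output by the computed per-hyperedge constant to the power $|\mathcal{E}|$, and propagate the error tolerance using standard AP-reduction bookkeeping so that the final approximation meets the required relative accuracy.
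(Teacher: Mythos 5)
Your treatment of the cases $k=2$ and $k=3$ is correct, and the $k=3$ triangle gadget with $N=4$ (so that $(1+\gamma)^2=2$ exactly) is a nice deterministic, exact alternative to anything in the paper. But the lemma is about arbitrary uniform hypergraphs, and for $k\ge 4$ — which you yourself identify as the main obstacle — you give only a plan, not a proof, and the plan does not hold up. The difficulty is that the hyperedge constraints here are soft: after summing out auxiliary rows, the contribution of any collection of pairwise/triangle-style columns to a configuration $\sigma$ depends on \emph{how} the two colour classes split the hyperedge, not merely on whether it is monochromatic. For example, for $k=4$, replacing $f=\{v_1,v_2,v_3,v_4\}$ by an auxiliary row $u$ and two triangle gadgets on $\{v_1,v_2,u\}$ and $\{u,v_3,v_4\}$ gives (after summing over $u$, and up to a global constant) weight $5$ on monochromatic configurations, $4$ on the aligned $2{+}2$ split, $3$ on $3{+}1$ splits, and $2$ on the crossed splits; symmetrising over the three pairings gives $125$, $16$ and $27$ on the three classes, still not constant on non-monochromatic configurations. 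Nothing in your sketch shows that your ``one chain per partition of $f$ into two non-empty parts'' collection escapes this, and in addition that collection has $2^{k-1}-1$ members, which is exponential in the hyperedge size and so not even polynomial-time constructible when $k$ grows with the input. Finally, even if all non-monochromatic classes could be equalised, you would still need the monochromatic-to-non-monochromatic ratio to be exactly $2$ (or within a controllable $e^{\epsilon}$ factor) with the weight pinned to $2^{2/N}-1$; this exactness is a delicate algebraic condition that your argument does not address.

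The paper sidesteps exactness entirely by exploiting the two freedoms you are not using: the reduction may be randomised, and $N$ (hence $\gamma=2^{2/N}-1$) may be chosen polynomially large. For each hyperedge $f$ it takes $N$ columns, each the indicator vector of a uniformly random \emph{even-sized} subset of $f$, with $N\ge 6m^2(n+\ln(16m))/\epsilon^2$. Monochromatic configurations satisfy all $N$ column equations, while for any non-monochromatic configuration each column equation holds with probability exactly $1/2$, so by a Chernoff bound plus a union bound over all $2^n$ configurations and $m$ hyperedges, with probability at least $7/8$ every non-monochromatic hyperedge contributes a factor $y^{N(1\pm\delta)/2}$ with $y=2^{2/N}$ and $\delta=\epsilon/(m\ln 2)$, i.e.\ $2$ up to a factor $2^{\delta}$; aggregating gives $\ZIsing(\matroid;\boldgamma)=2^{m}\,\ZPotts(\hypergraph;2,\boldone)$ up to $e^{\pm\epsilon}$, which is exactly the slack an AP-reduction tolerates (the $1/8$ construction failure probability is folded into the overall $1/4$). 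To repair your proof you would either need to reproduce an argument of this concentration type for $k\ge4$, or exhibit and verify a concrete exact gadget of polynomial size — neither of which is currently present.
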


\begin{proof}

Let  
$\hypergraph=(\hypervertices,\hyperedges)$ be a
$\terms$-uniform hypergraph, an instance of $\uhTutte(2,1)$.
Without loss of generality, assume $\terms>2$ since the result is immediate for $t=2$
(since a 2-uniform hypergraph is a graph, and a graphic matroid is binary).
Let $n=|\hypervertices|$ and $m=|\hyperedges|$ and assume that these are sufficiently large.
Let $\boldone$ be the constant function  
which maps   every $\hyperedge\in\hyperedges$ to~$1$.
By definition,
$$\ZPotts(\hypergraph;2,\boldone) = 
\sum_{\sigma:\hypervertices\rightarrow \{0,1\}}
2^{\mono(\sigma)},$$
where $\mono(\sigma)$ denotes the number of hyperedges $\hyperedge\in\hyperedges$
that are monochromatic in configuration~$\sigma$.

Let $\epsilon$ be the desired accuracy of the AP-reduction and
let $\delta = \epsilon/(m \ln 2)$. Let~$\myN$ be any
positive integer satisfying
$$\myN \geq
\frac{6m^2(n+\ln(16m))}{\epsilon^2}.
$$
We will construct a  
$n \times N m$
matrix~$\matrix$ 
so that~$\myN$ and~$\matrix$ constitute an input to~$\varybinTutte(2)$.
The rows of~$\matrix$ correspond to the elements of $\hypervertices$.
The matrix contains $N$ columns, $f_1, \ldots, f_\myN$, for
each hyperedge $f\in\hyperedges$.
When we construct the matrix, we choose each of
these columns to be the indicator vector 
for   an even-sized subset of~$f$, chosen independently and uniformly at random.

Given our construction, it is easy to see that
a configuration~$\sigma$ which is monochromatic on~$f$
will satisfy the equations corresponding to all~$\myN$ columns $f_1,\ldots,f_\myN$.
For this, it is important that the random subsets of~$f$ corresponding to these columns have
even size since the relevant equation $\lineq_{f_j}$ is   
$\sum_{\row\in \hypervertices} \matrix_{\row,f_j}\sigma(\row) = 0 \bmod 2$.

Suppose that a configuration~$\sigma$ is not monochromatic
on a hyperedge~$\hyperedge$ and that it assigns~$\ell$ elements of~$f$ to spin~$1$
and~$k$ elements of~$f$ to spin~$0$ for positive integers~$\ell$ and~$k$.
Note that $\ell+k=t$ since the hypergraph is $t$-uniform.
The number of even-sized subsets of~$f$ is~$2^{t-1}$
and the number of even-sized  subsets for which~$\sigma$ restricted to that
subset has an even number of 1s
is~$2^{\ell-1}2^{k-1} = 2^{t-2}$. 
Thus, the  probability
that the equation associated with a column~$f_i$ is satisfied is~$1/2$.
So, by a Chernoff bound, the number of columns~$f_i$
with satisfied equations is with high probability in the range
$[{(\myN/2)(1-\delta)},{(\myN/2)(1+\delta)}]$.
Specifically, the failure probability for this event is 
at most
$$
2\exp(- {\delta^2} \myN/6) \leq 2\exp(-n-\ln 16m)\leq  2\times2^{-n}\times\frac1{16m}=
\frac1{8m2^n}
$$
\cite[Cor.~4.6]{MitzEli}.  From the union bound --- ranging over events
indexed by the $2^n$ choices for $\sigma$ and $m$ choices of column~$f$ ---
we conclude that, with probability at least~$7/8$, the following is true.
For every configuration~$\sigma$, for every hyperedge $f\in\hyperedges$ on which $\sigma$
is not monochromatic, the number of columns in $\{f_1,\ldots,f_N\}$ with
equations satisfied by~$\sigma$ is in the
range $[{(\myN/2)(1-\delta)},{(\myN/2)(1+\delta)}]$.

Let~$\matroid$ be the binary matroid represented by~$\matrix$. Let $y=2^{2/\myN}$
and $\gamma = y-1$. Let $\boldgamma$ be the constant function which maps every element of the ground
set of~$\matroid$ to~$\gamma$ as in the definition of \varybinTutte.
To complete the verification of our reduction, we need to show that 
 $\ZPotts(\hypergraph;2,\boldone)$ may be easily computed given 
 $\ZtildeTutte(\matroid,2,\boldgamma)$;  note that the latter quantity 
 is equal to $2^{-|V|}\ZIsing(\matroid;\gamma)$ by (\ref{eq:fkising}).
Now the contribution of a configuration~$\sigma$
to the quantity 
$2^m \ZPotts(\hypergraph;2,\boldone)$
is $2^m 2^{\mathrm{mono}(\sigma)}$.
Let $\Psi_\sigma$ be the contribution of~$\sigma$
to  
$\ZIsing(\matroid;\boldgamma)$.
Then
\begin{align*} \Psi_\sigma &\leq
  {y}^{\,\myN \,\mathrm{mono}(\sigma)}
{y}^{(m - \mathrm{mono}(\sigma)) (\myN/2)(1+\delta)}\\
&\leq
{y}^{m\myN(1+\delta)/2}
  {y}^{(\myN/2) \mathrm{mono}(\sigma)}
= 2^{m \delta}  2^m 2^{\mathrm{mono}(\sigma)}.
\end{align*}
Also, we get a similar lower bound.
\begin{align*}  
\Psi_\sigma &\geq
 {y}^{\,\myN \,\mathrm{mono}(\sigma)}
{y}^{(m - \mathrm{mono}(\sigma)) (\myN/2)(1-\delta)}\\
 &\geq
{y}^{m\myN(1-\delta)/2}
 {y}^{(\myN/2) \mathrm{mono}(\sigma)}
 = 2^{-m \delta}  2^m 2^{\mathrm{mono}(\sigma)}. 
\end{align*}
The reduction has the desired accuracy for an AP-reduction, since 
$2^{m \delta} = e^{\epsilon}$.
Assume that the oracle call to $\varybinTutte(2)$ is powered to have failure probability~$\frac18$.
Then the overall failure probability is bounded by $\frac14$, being 
the sum of $\frac18$ from the randomised nature of the reduction itself,
and $\frac18$ from the single oracle call.
 \end{proof}

\subsection{Series-parallel extensions of binary matroids}
\label{sec:seriesparallel}

The standard method for reducing the problem of evaluating the Tutte polynomial
with some weight~$\gamma'$ to a Tutte-polynomial evaluation problem with a different
weight~$\gamma$ is to ``implement'' the edge weight~$\gamma'$ using
series and parallel extensions of weight-$\gamma$ edges. 
See~\cite[Section 2.3]{Sokal05} and~\cite[Section 10]{FerroPotts}.
Series and parallel extensions of matroids are generalisations of the 
stretchings and thickenings of graphs used by Jaeger et al.~\cite{JVW90}.
Sokal~\cite[Section 2.3]{Sokal05} has given the details, both for graphs and for general matroids.
It is fairly
easy to show that these extensions can be done within the class of binary matroids.
We do this here.

\begin{lemma} [parallel extension for binary matroids]
\label{lem:thicken}
Let $\matroid$ be a binary  matroid represented by the matrix~$\matrix$
with rows~$\rows$ and columns~$\columns$.
Let $\boldgamma=\{\gamma_\column\}_{\column\in\columns}$.
Let $c$ be any column in~$\columns$.
Suppose $\gamma_1>0$ and $\gamma_2>0$ satisfy
\begin{equation}
\label{eq:parallel}
1+\gamma_c  = (1+\gamma_1)(1 + \gamma_2).
\end{equation}
Let~$e'$ be a new column.
Define $\boldgamma'=\{\gamma'_\column\}_{\column\in\columns \cup \{e'\}}$
as follows. 
Let $\gamma'_{e'}=\gamma_2$ and $\gamma'_{c} = \gamma_1$. 
For every other column $\column\in \columns$, 
let $\gamma'_\column = \gamma_\column$.
There is a binary matroid $\matroid'$ represented by a matrix~$\matrix'$ with $\numrows$ rows
and columns $\columns\cup\{e'\}$
for which $\ZtildeTutte(\matroid;q,\boldgamma) = \ZtildeTutte(\matroid';q,\boldgamma')$.
\end{lemma}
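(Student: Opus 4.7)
The plan is to define $\matrix'$ explicitly as the matrix obtained from $\matrix$ by appending one new column $e'$ that duplicates column $c$ (i.e.\ $\matrix'_{\row,e'}=\matrix_{\row,c}$ for every $\row\in\rows$). Since $\matrix'$ is still over~$\gf2$, it represents a binary matroid~$\matroid'$ on ground set $\columns\cup\{e'\}$. In $\matroid'$ the elements $c$ and $e'$ are parallel: as vectors they are identical, so for every $B\subseteq\columns\setminus\{c\}$ we have the two rank identities
\[
r_{\matroid'}(B)=r_\matroid(B), \qquad
r_{\matroid'}(B\cup\{e'\})=r_{\matroid'}(B\cup\{c\})=r_{\matroid'}(B\cup\{c,e'\})=r_\matroid(B\cup\{c\}).
\]

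Next I would expand $\ZtildeTutte(\matroid';q,\boldgamma')$ using the defining sum~(\ref{tildedef}) and split the subsets $A\subseteq\columns\cup\{e'\}$ into four classes according to whether $c$ and $e'$ lie in~$A$. Writing
\[
S_0=\sum_{B\subseteq\columns\setminus\{c\}}q^{-r_\matroid(B)}\prod_{\column\in B}\gamma_\column,
\qquad
S_1=\sum_{B\subseteq\columns\setminus\{c\}}q^{-r_\matroid(B\cup\{c\})}\prod_{\column\in B}\gamma_\column,
\]
so that $\ZtildeTutte(\matroid;q,\boldgamma)=S_0+\gamma_c S_1$, the rank identities and the weight assignment ($\gamma'_c=\gamma_1$, $\gamma'_{e'}=\gamma_2$, $\gamma'_\column=\gamma_\column$ otherwise) give contributions of $S_0$, $\gamma_1 S_1$, $\gamma_2 S_1$, and $\gamma_1\gamma_2 S_1$ from the four classes (``neither'', ``$c$ only'', ``$e'$ only'', ``both'') respectively. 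Summing,
\[
\ZtildeTutte(\matroid';q,\boldgamma')=S_0+(\gamma_1+\gamma_2+\gamma_1\gamma_2)S_1.
\]

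The hypothesis $(1+\gamma_1)(1+\gamma_2)=1+\gamma_c$ rewrites as $\gamma_1+\gamma_2+\gamma_1\gamma_2=\gamma_c$, so the expression collapses to $S_0+\gamma_c S_1=\ZtildeTutte(\matroid;q,\boldgamma)$, which is the required identity. There is no genuine technical obstacle: the positivity assumptions on $\gamma_1,\gamma_2$ are not needed for the polynomial identity, and the only place care is required is in the case split and in checking the rank of each $A$ in~$\matroid'$ via the parallelism of $c$ and~$e'$.
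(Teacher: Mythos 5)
Your proof is correct. You use the same construction as the paper --- $\matrix'$ is obtained from $\matrix$ by appending a duplicate of column~$c$ as the new column~$e'$ --- but you verify the resulting identity differently. The paper observes that $\matroid=\matroid'\backslash e'$ and then invokes Sokal's parallel-reduction identity \cite[(4.22)]{Sokal05}, which requires checking a side-condition, namely that either $c$ and $e'$ form a two-element circuit of $\matroid'$ or both are loops (the paper splits into the cases $c\neq 0$ and $c=0$ for this). You instead expand $\ZtildeTutte(\matroid';q,\boldgamma')$ directly from the definition~(\ref{tildedef}), split the subsets according to membership of $c$ and $e'$, use the rank identities coming from the fact that $c$ and $e'$ are equal as vectors, and collapse the four contributions via $\gamma_1+\gamma_2+\gamma_1\gamma_2=\gamma_c$. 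Your route is self-contained and elementary, avoids the deletion identity and the circuit/loop case analysis, and makes explicit that the positivity of $\gamma_1,\gamma_2$ plays no role in the polynomial identity (it matters only for the later implementation arguments); the paper's route is shorter given Sokal's machinery and emphasises that the statement is an instance of a general matroid operation rather than a computation special to the binary representation. Both are complete proofs of the lemma as stated.
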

 
 \begin{proof}
 Let $M'$ be the matrix constructed from~$M$  by making column~$e'$ a copy of column~$c$.
 Let $\matroid'$ be the matroid represented by~$M'$.
Note that, for any $A\subseteq E$, $r_{\calM}(A) = r_{\calM'}(A)$. Thus, from the
definitions in Section~\ref{sec:matroid}, $\calM = \calM'\backslash e'$. 
The result now follows from \cite[(4.22)]{Sokal05}, provided
that we can show that either (i) $ c$ and $e'$ form a two-element circuit of $\calM'$, or
(ii)  $c$ and $e'$ are both loops of $\calM'$.
(These are the two side-conditions for the application of
\cite[(4.22)]{Sokal05}.)
 Now, If $c$ is not the all-zero vector,
 then $c$ and $e'$ do form a two-element circuit (minimal dependent set) of $\calM'$
 since $r_{\calM'}(\{c,e'\}) = r_{\calM'}(\{c\})= r_{\calM'}(\{e'\})=1$. 
 Otherwise, $r_{\calM'}(\{c\}) = r_{\calM'}(\{e'\})=0$, so
 both $c$ and $e'$ are loops of $\calM'$.   
 \end{proof}

\begin{lemma}[series extension for binary matroids]
\label{lem:stretch}
Let $\matroid$ be a binary  matroid represented by the matrix~$\matrix$
with rows~$\rows$ and columns~$\columns$. 
Let $\boldgamma=\{\gamma_\column\}_{\column\in\columns}$.
Let $c$ be any column in~$\columns$.
Assume $q\not=0$, and suppose $\gamma_1>0$ and $\gamma_2>0$ satisfy
\begin{equation}
\label{eq:series}
\left(1+\frac{q}{\gamma_c}\right)
= \left(1+\frac{q}{\gamma_1}\right)\left(1+\frac{q}{\gamma_2}\right)
\end{equation}
Let~$e'$ be a new column.
Define $\boldgamma'=\{\gamma'_\column\}_{\column\in\columns \cup \{e'\}}$
as follows. 
Let $\gamma'_{e'}=\gamma_2$ and $\gamma'_{c} = \gamma_1$. 
For every other column $\column\in \columns$, 
let $\gamma'_\column = \gamma_\column$.
There is a binary matroid $\matroid'$ represented by a matrix~$\matrix'$ with $\numrows+1$ rows
and columns $\columns\cup\{e'\}$
for which $(1+\gamma_1/q + \gamma_2/q)\, \ZtildeTutte(\matroid;q,\boldgamma) = \ZtildeTutte(\matroid';q,\boldgamma')$.
\end{lemma}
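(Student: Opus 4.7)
My plan is to dualise the construction in Lemma~\ref{lem:thicken}. Instead of duplicating column $c$ (which creates a 2-element circuit), I would add a new row so that $\{c,e'\}$ becomes a 2-element cocircuit of $\matroid'$; this is exactly the side-condition required by the series half of Sokal's identity~\cite[(4.22)]{Sokal05}.

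Concretely, I would construct $\matrix'$ from $\matrix$ by appending a new row $\row^*$ and a new column $e'$: column $e'$ has a $1$ in row $\row^*$ and a $0$ in every other row; column $c$ of $\matrix'$ agrees with column $c$ of $\matrix$ on the rows of $\rows$ and carries an additional $1$ in row $\row^*$; every other column $\column$ of $\matrix'$ agrees with column $\column$ of $\matrix$ on $\rows$ and has a $0$ in row $\row^*$. Let $\matroid'$ be the binary matroid represented by $\matrix'$.

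I would then verify three structural facts by elementary rank calculations. First, $\matroid=\matroid'/e'$: using the contraction formula $r_{\matroid'/e'}(\edgesubset)=r_{\matroid'}(\edgesubset\cup\{e'\})-r_{\matroid'}(\{e'\})$ together with the observation that the column representing $e'$ contributes exactly one extra dimension independent of the other columns selected, one finds $r_{\matroid'/e'}(\edgesubset)=r_{\matroid}(\edgesubset)$ for every $\edgesubset\subseteq\columns$. Second, in the generic case where $c$ is not a coloop of $\matroid$, the pair $\{c,e'\}$ forms a 2-element cocircuit of $\matroid'$, since removing either element alone preserves the rank of $\matroid'$ while removing both drops it by exactly one. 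Third, if $c$ is a coloop of $\matroid$, the same computations give that both $c$ and $e'$ are individually coloops of $\matroid'$.

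In either case, Sokal's series-reduction identity~\cite[(4.22)]{Sokal05} applies, producing the prefactor $(1+\gamma_1/q+\gamma_2/q)$ together with the series combination rule on edge weights that coincides with~(\ref{eq:series}), and yielding the claimed equality. The main obstacle is bookkeeping rather than conceptual difficulty: the construction is precisely dual to that of Lemma~\ref{lem:thicken}, the rank verifications are elementary, and only minor care is required to treat the cocircuit case and the pair-of-coloops case uniformly.
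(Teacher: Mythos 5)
Your proposal is correct and follows essentially the same route as the paper: append a new row and a new column $e'$ so that $\matroid=\matroid'/e'$, verify the two-element cocircuit (resp.\ pair-of-coloops) side condition by rank computations, and invoke Sokal's series-reduction identity. The only slip is the citation: the series identity is \cite[(4.28)]{Sokal05}, not (4.22), which is the parallel identity used in Lemma~\ref{lem:thicken}.
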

 
\begin{proof}
Let $M'$ be the matrix constructed from $M$ by adding a new column~$e'$ and a new row $r'$.
The new row has ones in column~$c$ and column~$e'$ only. There are no other ones in column~$e'$.
Let $\matroid'$ be the matroid represented by~$M'$.  
We first show that, for any $A\subseteq E$,
\begin{equation}\label{eqtemp}
r_{\calM}(A) = r_{\calM'}(A\cup e')-1.
\end{equation}
Equation (\ref{eqtemp}) can be verified by checking three cases
\begin{itemize}
\item $A\subseteq E-\{c\}$,
\item $c\in A$ and $r_{\calM}(A) = r_{\calM}(A-\{c\})$, and
\item $c\in A$ and $r_{\calM}(A) = r_{\calM}(A-\{c\})+1$.
\end{itemize}
(The conditions on the rank function of a matroid guarantee that these are the only three cases.)
Now note that $r_{\calM'}(\{e'\})=1$,
so (\ref{eqtemp}) implies
that $r_{\calM}(A) = r_{\calM'}(A\cup \{e'\})-r_{\calM'}(\{e'\})$. We conclude (from the definitions in Section~\ref{sec:matroid}) that
$\calM = \calM' / e'$.
The result now follows from \cite[(4.28)]{Sokal05}, provided that we can show either
(i) $c$ and $e'$ form a cocircuit of $\calM'$, or
(ii) $c$ and $e'$ are both coloops of $\calM'$.
(These are the two side-conditions for the application of
\cite[(4.28)]{Sokal05}.)

Suppose first 
that $c$ is a coloop of $\calM$ (i.e., an element that is present in every basis of~$\calM$).
Consider any independent set $A\subseteq E-\{c\}$ of $\calM'$.
Since $c$ is linearly independent of the columns in~$A$ in $M$,
both $c$ and $e'$ are linearly independent of the columns in~$A$ in $M'$.
Thus, both $c$ and $e'$ are coloops of $\calM'$ and we have (ii).

Finally, suppose that $c$ is not a coloop of $\calM$. Our goal is to prove (i).
First, Any independent set in~$\matroid'$ including neither $c$ nor~$e'$ can be extended 
to a larger independent set by adding either one of $c$ or~$e'$, so every basis of $\matroid'$
intersects $\{c, e'\}$.  
However,  since $c$ is not
a coloop of $\calM$, there is a basis $B$ of~$\matroid$ that does not include~$c$.  
Then $B \cup\{c\}$ and $B\cup\{e'\}$ are both
bases of~$\matroid'$ 
and so $\{c, e'\}$ is a minimal set that intersects every basis of $\calM'$. Thus, we have (i).
\end{proof}

\subsection{Implementing variable weights}

\begin{lemma}
\label{lem:shift}
Suppose that $\gamma>0$ is efficiently approximable.
Then
$\varybinTutte(2) \APred \binTutte(2,\gamma)$.
\end{lemma}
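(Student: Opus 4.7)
The plan is a standard gadget-based reduction. Given an input $(\matrix,\myN)$ to $\varybinTutte(2)$, representing a binary matroid $\matroid$ on column set $\columns$ with target weight $\gamma' := 2^{2/\myN}-1$ on every column, the oracle for $\binTutte(2,\gamma)$ only evaluates at the fixed weight~$\gamma$. I would replace each column of $\matrix$ by a series-parallel gadget built from weight-$\gamma$ elements, chosen so that the gadget simulates an effective weight $\hat\gamma$ close to $\gamma'$. Iterating Lemmas~\ref{lem:thicken} and~\ref{lem:stretch} guarantees that the resulting matroid $\matroid^*$ is binary and produces an explicit scalar~$F$ (a product of the factors $1+\gamma_1/q+\gamma_2/q$ contributed by the series steps; parallel steps contribute no factor) satisfying $\ZtildeTutte(\matroid^*;2,\gamma)=F\cdot\ZtildeTutte(\matroid;2,\hat\gamma)$. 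A single oracle call then yields an approximation of $\ZtildeTutte(\matroid;2,\hat\gamma)$, and a continuity argument converts this into one for the target $\ZtildeTutte(\matroid;2,\gamma')$.

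The gadget I would use is $m$ parallel copies of a series chain of $k$ weight-$\gamma$ elements. Iterating Lemma~\ref{lem:stretch} shows that the chain has effective weight $\gamma_s=2/((1+2/\gamma)^k-1)$, and iterating Lemma~\ref{lem:thicken} then gives $\hat\gamma=(1+\gamma_s)^m-1$. Writing the matching condition in log form, I want an integer $m$ making $|m\log(1+\gamma_s)-\log(1+\gamma')|$ small enough to force $|\hat\gamma-\gamma'|/\gamma'\leq\delta$, where $\delta:=\Theta(\epsilon/|\columns|)$ is the target relative accuracy on $\hat\gamma$. The achievable values $\{m\log(1+\gamma_s):m\in\N\}$ form an arithmetic progression with step $\log(1+\gamma_s)\approx\gamma_s$, so any target in the right range can be hit to within $\gamma_s/2$. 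Requiring $\gamma_s\leq\delta\gamma'$ yields $k=O(\log(1/\delta)+\log\myN)$ (from $(1+2/\gamma)^k\geq 2/(\delta\gamma')$) and then $m=O((\myN\gamma_s)^{-1})=O(1/\delta)$. Both are polynomial in the input, since $\myN$ is given in unary and $\delta^{-1}$ is polynomial in $|\columns|$ and $\epsilon^{-1}$. All necessary numerical quantities, including~$F$, are computable to polynomial precision because $\gamma$ is efficiently approximable.

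For the error propagation, since all terms of~(\ref{tildedef}) are non-negative and each simulated weight satisfies $|\hat\gamma-\gamma'|/\gamma'\leq\delta$, a term-by-term comparison gives
\[
\frac{\ZtildeTutte(\matroid;2,\hat\gamma)}{\ZtildeTutte(\matroid;2,\gamma')}\in\bigl[(1-\delta)^{|\columns|},\,(1+\delta)^{|\columns|}\bigr]\subseteq[e^{-\epsilon/2},\,e^{\epsilon/2}].
\]
Invoking the oracle with accuracy $\epsilon/2$ then gives the required $e^{\pm\epsilon}$-approximation. The main obstacle in the proof is showing that polynomial-size series-parallel combinations of weight-$\gamma$ elements can approximate $\gamma'$ to inverse-polynomial relative accuracy; the parameter count above is precisely what delivers this, crucially exploiting the fact that the series operation drives the effective weight down to $\Theta(1/\myN)$ using only $O(\log\myN)$ edges, while the parallel operation then refines it in steps of size $\Theta(\gamma_s)$ to match $\gamma'$.
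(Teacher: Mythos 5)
Your proposal is correct and takes essentially the same route as the paper: the input-dependent weight $\gamma'=2^{2/\myN}-1$ is implemented by series-parallel extensions of weight-$\gamma$ elements via Lemmas~\ref{lem:thicken} and~\ref{lem:stretch}, the per-element relative error $\delta=\Theta(\epsilon/|\columns|)$ is propagated term-by-term through the sum~(\ref{tildedef}), and a single oracle call finishes the job. Your explicit parallel-copies-of-series-chains gadget with the arithmetic-progression argument is exactly the implementation step the paper imports as a black box from the proof of \cite[Lemma 17]{FerroPotts}, and your one-line appeal to efficient approximability of $\gamma$ (for choosing $k,m$ and computing the scalar $F$) corresponds to the paper's more explicit bookkeeping with the rational approximation $\hat\gamma$ and the precision parameter $\chi$.
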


\begin{proof}
Let $C_\gamma$ be a sufficiently large function of the parameter~$\gamma$.
The exact computation of~$C_\gamma$ is from~\cite{FerroPotts}. This will be explained below.

Let~$\matrix$ and~$\myN$ be an instance of $\varybinTutte(2)$. 
Let~$\matroid$ be the matroid represented by~$\matrix$.
Suppose that~$\matrix$ has
$n$~rows and $m$~columns and that the product $\myN m$ is sufficiently large with respect to
the constant~$C_\gamma$.
Let $\gamma' = 2^{2/\myN}-1$.
Let $\boldgamma'$ be the constant function which maps every 
ground set element of~$\matroid$ 
to $\gamma'$.

The proof is based on the proof of~\cite[Lemma 17]{FerroPotts}. 
Let $\epsilon$ be the desired accuracy in the approximation-preserving reduction.
Let 
$$\chi = \frac{\epsilon^2}
{4 C_\gamma m^2 N}.$$
 Let  
$\hat\gamma$ be a rational in the range
$e^{-\chi} \gamma \leq \hat \gamma \leq e^{\chi} \gamma$.
Since   $\gamma$  is efficiently approximable,  the amount of time
that it takes to compute   $\hat \gamma$
is at most a polynomial in $m$, $N$ and $\epsilon^{-1}$.

The idea of the proof is to show how to use 
series and parallel extensions of weight-$\hat\gamma$ elements to 
implement weight ${\gamma}^*$ 
satisfying
\begin{equation}
\label{eq:finalfinal}
e^{- \chi} {\gamma'} \leq  {\gamma}^* \leq 
e^{ \chi} \gamma'.
\end{equation}
Let $\boldgamma^*$ be the 
constant function which maps every 
ground set element of~$\matroid$
to ${\gamma}^*$.
The definition of $\ZtildeTutte$ and the fact that $\chi \leq \epsilon /(4 m)$ imply that
$$
e^{-\varepsilon /4}
\ZtildeTutte(\matroid;2,\boldgamma') \leq
\ZtildeTutte(\matroid;2,\boldgamma^*) \leq 
e^{\varepsilon /4}
\ZtildeTutte(\matroid; 2, \boldgamma').
$$

Let $\hat\boldgamma$ be the 
constant function which maps every ground set element to
$\hat\gamma$.
We can think of our implementations as constructing a binary matroid $\widehat{\matroid}$
such that 
$\ZtildeTutte(\matroid;2,\boldgamma^*)$ is equal to 
the product of $\ZtildeTutte(\widehat{\matroid};2,\hat \boldgamma)$
and an easily-computed function of~$\hat \gamma$.
This easily-computed
function arises from  the extra factor 
$(1+\gamma_1/q+\gamma_2/q)$ in Lemma~\ref{lem:stretch}.
We will ensure that the 
matroid~$\widehat{\matroid}$ has at most~$C_\gamma m^2 N/\epsilon$
ground set elements. 
To finish, we note, using the definition of~$\ZtildeTutte$ and the definition of~$\chi$,  that 
$$e^{-\varepsilon/4 }
\ZtildeTutte(\widehat{\matroid};2,\boldgamma) \leq
\ZtildeTutte(\widehat{\matroid};2,\hat \boldgamma) \leq 
e^{\varepsilon /4}
\ZtildeTutte(\widehat{\matroid}; 2, \boldgamma),$$
where $\boldgamma$ is the constant weight function which assigns every element weight $\gamma$. 
We finish the approximation of 
$ \ZtildeTutte(\widehat{\matroid};2,\hat \boldgamma)$
by using the oracle to approximate
$\ZtildeTutte(\widehat{\matroid};2,\boldgamma) $ using accuracy parameter $\delta = \epsilon/2$.

It remains to show how to do the implementation.
Take
$$\pi = \frac{\chi}{2}(2^{2/\myN}-1) 
= \frac{\chi}{2} \gamma'
\leq \gamma'(1-e^{-\chi}).$$
The proof of~\cite[Lemma 17]{FerroPotts} 
shows how to use series and parallel extensions of weight~$\hat\gamma$ elements
(from Lemmas~\ref{lem:thicken} and~\ref{lem:stretch})
to implement a weight
 ${\gamma}^*$
which satisfies ${\gamma'} - \pi \leq {\gamma}^* \leq {\gamma'}$.
This ensures that Equation~(\ref{eq:finalfinal}) holds. 
The series and parallel extensions
in the implementation of~$\gamma^*$ introduce at most~$C_\gamma \log(\pi^{-1})$ ground set elements,
where $C_\gamma$ is some quantity depending on~$\gamma$ but not on~$n$, $m$, or~$\myN$.
Note that $2^{2/\myN}-1 \geq 2 \ln(2)/\myN$ so
$\pi^{-1}\leq 2N/\chi$.
Thus, $m C_\gamma  \log(\pi^{-1}) \leq C_\gamma m^2 N/\epsilon$
and the matroid~$\widehat{\matroid}$ has 
at most $C_\gamma m^2 N/\epsilon$ ground set elements, as required above.
\end{proof}

\subsection{The Proof of  Theorem~\ref{thm:binmatroid}}

\begin{proof}[Proof of Theorem~\ref{thm:binmatroid}]
Item~(\ref{itemspecial}) comes from~\cite{JVW90}.
Item~(\ref{itemantiferropotts}) follows quickly
from~\cite{tuttepaper} together with an application of matroid 
duality;  here are the details.
When $q>2$, item~(\ref{itemantiferropotts}) follows from the
corresponding hardness result for graphs, since binary matroids 
generalise graphic matroids.  
The same is true when $q=2$ and $\gamma>-2$.
The complementary case $q=2$ and $\gamma<-2$ then follows by matroid duality,
as can be seen by combining the following observations:
Binary matroids 
are closed under duality \cite[2.2.9]{oxley}
and the representation of the dual of a binary matroid can be constructed efficiently
\cite[2.2.8]{oxley}.
Also, if $\matroid^*$ is the dual of~$\matroid$ then
$\ZtildeTutte(\matroid;q,\gamma)$ is an easily-computed
multiple of
$\ZtildeTutte(\matroid^*,q,q/\gamma)$ --- see \cite[(4.14)]{Sokal05}.
Finally for $q=2$, if $\gamma<-2$ then $-2<q/\gamma < 0$.

 Item~(\ref{itemnew}) follows from \cite[Theorem 1]{FerroPotts} for $q>2$ 
since the cycle matroid of a graph is a binary matroid. 
For $q=2$ it follows from Lemmas~\ref{lem:BIS}, Lemma~\ref{lem:binmatroid} and~\ref{lem:shift}.
\end{proof}

\section{The weight enumerator of a binary linear code}
\label{sec:we}

Given a \emph{generating matrix}
$M$ over $\gf2$ with $r$ linearly independent rows and $c$~columns,
a \emph{code word}~$w$ is any
vector in the linear subspace~$\Upsilon$ generated by the rows of~$M$.
For any real number~$\lambda$, the \emph{weight
enumerator} of the code is given by $W_M(\lambda)=\sum_{w\in
\Upsilon}\lambda^{\|w\|}$, where $\|w\|$ is the number of non-zero
entries 
in~$w$. 
We consider the following computational problem, parameterised
by~$\lambda$.
\begin{description}
\item[Problem] $\WE(\lambda)$.
\item[Instance]  A generating matrix
$M$ over GF($2$).
\item[Output]  $W_M(\lambda)$.
\end{description}

It is well-known (see below) that the weight enumerator of a binary linear code is 
a special case of the Tutte polynomial of a binary matroid.
Thus, Theorem~\ref{thm:binmatroid}
has the following corollary.
\begin{corollary}
\label{cor:we}
Suppose that $\lambda$ is efficiently approximable.
\begin{enumerate}
\item If $\lambda\in \{-1,0,1\}$ then $\WE(\lambda)$ is solvable in polynomial time.
\item If $|\lambda|>1$ then there there is no FPRAS for $\WE(\lambda)$ unless $\NP=\RP$.
\item If $\lambda\in (-1,0)$ then there there is no FPRAS for $\WE(\lambda)$ unless $\NP=\RP$.
\item If $\lambda\in(0,1)$ then $\BIS \APred  \WE(\lambda)$.\end{enumerate}
\end{corollary}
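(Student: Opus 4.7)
The plan is to derive, via a direct computation with the Ising formulation~(\ref{eq:Pottspfdefn}), the standard identity expressing the weight enumerator of a binary linear code as a specialisation of the Tutte polynomial of the associated binary matroid, and then to transfer each item of Theorem~\ref{thm:binmatroid} to the corresponding range of~$\lambda$.

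For the identity, I view the columns of the generating matrix~$M$ as defining a binary matroid~$\matroid$ with ground set~$\columns$ of size~$c$, row set~$\rows$, and rank~$r$. Set $q=2$ and the constant weighting $\gamma_\column=\gamma$ in~(\ref{eq:Pottspfdefn}), and observe that $\delta_\column(\sigma)=1$ iff the $\column$th coordinate of the codeword $\sigma^\top M$ is zero. Hence, for each $\sigma$, $\prod_\column(1+\gamma\delta_\column(\sigma)) = (1+\gamma)^{c-\|\sigma^\top M\|}$. Summing over $\sigma\in\{0,1\}^\rows$, and using that $\sigma\mapsto\sigma^\top M$ is a linear surjection onto~$\Upsilon$ with kernel of size $2^{|\rows|-r}$, gives
\[
\ZIsing(\matroid;\boldgamma) \;=\; 2^{|\rows|-r}(1+\gamma)^c\,W_M\!\bigl(1/(1+\gamma)\bigr).
\]
Combined with~(\ref{eq:fkising}) and the substitution $\lambda=1/(1+\gamma)$, equivalently $\gamma=(1-\lambda)/\lambda$, this yields
\[
W_M(\lambda) \;=\; 2^{r}\,\lambda^{c}\,\ZtildeTutte(\matroid;2,\boldgamma),
\]
so $\WE(\lambda)$ and $\binTutte(2,\gamma)$ differ by an exactly computable factor, and in particular $\binTutte(2,\gamma)\APred\WE(\lambda)$ whenever $\lambda$ is efficiently approximable.

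Items~(2)--(4) now follow by tracking where the substitution $\gamma=(1-\lambda)/\lambda$ sends the stated intervals. It maps $(0,1)\to(0,\infty)$, $(1,\infty)\cup(-\infty,-1)\to(-2,0)\setminus\{-1\}$, and $(-1,0)\to(-\infty,-2)$; moreover $\gamma=-1$ is never attained. Item~(4) follows from Theorem~\ref{thm:binmatroid}(\ref{itemnew}) via the chain $\BIS\APred\binTutte(2,\gamma)\APred\WE(\lambda)$, and items~(2) and~(3) follow from Theorem~\ref{thm:binmatroid}(\ref{itemantiferropotts}) (none of the excluded $(2,-2)$ or $(2,-1)$ special points lies in the image for the relevant~$\lambda$, since $\lambda=-1$ is handled separately in item~(1) and $\gamma=-1$ is unattainable). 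For item~(1) I handle the three values directly: $W_M(0)=1$ is immediate, $W_M(1)=2^{r}$ reduces to computing the rank of~$M$, and $W_M(-1)$ corresponds to the special point $\gamma=-2$ of Theorem~\ref{thm:binmatroid}(\ref{itemspecial}) (equivalently, $W_M(-1)\in\{2^{r},0\}$ according to whether the all-ones vector lies in $\Upsilon^\perp$, a question of linear algebra).

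The only bookkeeping nuisance is that $\lambda^{c}$, and hence $W_M(\lambda)$, may be negative when $\lambda<0$; but since the sign of the scaling constant $2^{r}\lambda^{c}$ is determined exactly from the input, an approximation of $|W_M(\lambda)|$ converts into one of $\ZtildeTutte(\matroid;2,\boldgamma)$ (and vice versa) in the standard way. I do not anticipate any real obstacle beyond correctly aligning the boundary values of~$\lambda$ with the appropriate items of Theorem~\ref{thm:binmatroid}.
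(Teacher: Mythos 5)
Your proposal is correct and follows essentially the same route as the paper: you rederive Greene's identity $W_M(\lambda)=2^{r}\lambda^{c}\,\ZtildeTutte(\matroid;2,\boldgamma)$ with $\gamma=1/\lambda-1$ by comparing the Ising formulation~(\ref{eq:Pottspfdefn})/(\ref{eq:fkising}) with the weight enumerator (this is exactly the paper's Lemma~\ref{lem:WE} and its proof), and then transfer each item of Theorem~\ref{thm:binmatroid} through the same interval mapping. Your explicit handling of $\lambda\in\{-1,0,1\}$ and of signs is just a slightly more detailed version of what the paper leaves implicit.
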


Corollary~\ref{cor:we} follows immediately from Theorem~\ref{thm:binmatroid}
and from  Lemma~\ref{lem:WE} below, which is the $q=2$ case of a result 
of Greene~\cite[Corollary 4.5]{Greene}. See also Cameron~\cite[Theorem 4.1]{cameron},
but note that both authors employ a different parameterisation of the Tutte polynomial.
We provide a short proof here for completeness, since 
we have already done almost all of the necessary work.
 
\begin{lemma} [Greene]
\label{lem:WE}
Let $M$ be a generating matrix over $\gf2$ with rows~$\rows$ and columns~$\columns$.
Let $\matroid$ be the binary matroid represented by~$M$. 
Let $\lambda$ be any non-zero real number and let $\gamma = 1/\lambda-1$.
Let $\boldgamma$ be the constant function  with $\boldgamma_\column = \gamma$ for every 
column~$\column$ of~$M$.
Then 
$$W_M(\lambda) = { \lambda}^{|\columns|}2^{|V|}\ZtildeTutte(\matroid;2,\boldgamma).$$ \end{lemma}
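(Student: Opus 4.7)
The plan is to recast the weight-enumerator sum as an Ising-style sum over spin assignments $\sigma:\rows\to\{0,1\}$, then match it against the Potts definition~(\ref{eq:Pottspfdefn}) with $q=2$ and finally apply~(\ref{eq:fkising}).

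First, because the rows of $M$ are linearly independent, the map $\sigma\mapsto \sigma M$ is a bijection from $\gf2^{\rows}$ onto the row space~$\Upsilon$. Hence
\[
W_M(\lambda)=\sum_{\sigma:\rows\to\{0,1\}}\lambda^{\|\sigma M\|}.
\]
Next, observe that the $\column$-th coordinate of $\sigma M$ is precisely $\sum_{\row\in\rows}M_{\row,\column}\sigma(\row)\pmod 2$, which is the left-hand side of the equation $\lineq_\column$ from Section~2.2. So this coordinate vanishes iff $\delta_\column(\sigma)=1$. Consequently, if we write $s(\sigma)=\sum_{\column\in\columns}\delta_\column(\sigma)$ for the number of satisfied equations, then $\|\sigma M\|=|\columns|-s(\sigma)$.

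Third, factor out $\lambda^{|\columns|}$ and use the substitution $1/\lambda = 1+\gamma$ to rewrite
\[
W_M(\lambda)=\lambda^{|\columns|}\sum_{\sigma}\lambda^{-s(\sigma)}=\lambda^{|\columns|}\sum_{\sigma}(1+\gamma)^{s(\sigma)}=\lambda^{|\columns|}\sum_{\sigma}\prod_{\column\in\columns}\bigl(1+\gamma\,\delta_\column(\sigma)\bigr),
\]
since each factor in the product equals $1+\gamma$ when $\delta_\column(\sigma)=1$ and $1$ otherwise. By~(\ref{eq:Pottspfdefn}) with $q=2$, the inner sum is $\ZIsing(\matroid;\boldgamma)$, and~(\ref{eq:fkising}) identifies this with $2^{|\rows|}\ZtildeTutte(\matroid;2,\boldgamma)$, yielding the claimed formula.

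There is no real obstacle here; the only points requiring care are the bijectivity of $\sigma\mapsto\sigma M$ (which is exactly why the problem definition insists on linearly independent rows) and the bookkeeping of the reciprocal in $\gamma=1/\lambda-1$. Note that the statement excludes $\lambda=0$ precisely because this substitution is undefined there; the case $\lambda=0$ is in any case trivial, as $W_M(0)=1$.
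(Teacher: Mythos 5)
Your proof is correct and follows essentially the same route as the paper's: both rewrite $W_M(\lambda)$ as a sum over spin assignments $\sigma:\rows\to\{0,1\}$ (using linear independence of the rows), express the exponent via the number of satisfied equations $\delta_\column(\sigma)$, and match this against the Ising form of~(\ref{eq:Pottspfdefn}) together with~(\ref{eq:fkising}). The only difference is presentational — you make the bijection $\sigma\mapsto\sigma M$ explicit, which the paper leaves implicit — so no further comment is needed.
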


\begin{proof} 
As in the proof of Lemma~\ref{lem:binmatroid},
we have, from equations (\ref{eq:Pottspfdefn}) and~(\ref{eq:fkising}),
\begin{equation}\label{Isingexpression}
2^{|V|}\ZtildeTutte(\matroid;2,\boldgamma) = \ZIsing(\matroid;\boldgamma)=
\sum_{\sigma:\rows\rightarrow \{0,1\}}
{(1+\gamma)^{\goodcol(\sigma)}}=
\sum_{\sigma:\rows\rightarrow \{0,1\}}
{\lambda^{-\goodcol(\sigma)}},
\end{equation}
where $\goodcol(\sigma)$ denotes the number of columns $\column\in\columns$
such that 
$$\sum_{\row\in \rows} M_{\row,\column}\sigma(\row) = 0\pmod2,$$
or, using the notation introduced earlier, $\delta_\column(\sigma)=1$.
Similarly, 
\begin{equation}\label{WEexpression}
W_M(\lambda)
=  
\sum_{\sigma:\rows\rightarrow \{0,1\}}\,
\prod_{e\in\columns} \lambda^{1-\delta_\column(\sigma)}
=\sum_{\sigma:\rows\rightarrow \{0,1\}}\lambda^{|E|-\goodcol(\sigma)},
\end{equation}
where, corresponding to~$\sigma$, 
the codeword~$w$ is the sum of the rows~$i$ with $\sigma(i)=1$,
so $1-\delta_\column(\sigma)$ is the 
bit in position~$\column$ of the code-word.
The result follows by comparing (\ref{Isingexpression}) and (\ref{WEexpression}).
\end{proof}

\section{The cycle index polynomial}
\label{sec:ci}
 
Let $\group$ be a group of permutations of $\{1,\ldots,\points\}$.
Each permutation $\perm\in\group$ decomposes the
set $\{1,\ldots,\points\}$ into a collection of cycles. $\cycles(\perm)$ denotes
the number of cycles in this decomposition. 
The single-variable \emph{cycle index polynomial} of~$\group$ is 
defined as follows.
$$\ZCycleIndex(\group;\ciparam) = \frac{1}{|\group|}
\sum_{\perm \in\group} \ciparam^{\cycles(\perm)}.$$

When~$\ciparam$ is a positive integer, $\ZCycleIndex(\group;\ciparam)$
counts the number of \emph{orbits} (or equivalence classes)
when strings from a size-$\ciparam$ alphabet are
operated on by permutations in~$\group$ (which permute the character positions in the strings).
Two strings are in the same equivalence class if there is a permutation in $\group$ which maps one into the other.
For example, when $x=2$, $\nu=3$, and $\group$ is the symmetric
group (on $3$ elements), the elements
of $\group$ are the identity permutation (which has $3$~cycles), 
the three transpositions
$(1\>2)$, $(1\>3)$ and $(2\>3)$ (each of which has $2$~cycles),
and the singleton cycles $(1\>2 \> 3)$ and $(1 \> 3 \> 2)$. Thus
$$
\ZCycleIndex(\group;2)
=\frac{1}{6} 
\left( 2^3 + 3 \times 2^2 + 2\times 2^1\right)
=4.$$ Thus, there are four orbits (namely
the orbits of the strings $000$, $001$, $011$, and $111$).
For more details, see \cite{DeBruijn,HP, bccsurvey}.
We consider the following problem, in which the parameter~$\ciparam$ is a positive real number.
\begin{description}
\item[Problem] $\CycleIndex(\ciparam)$
\item[Instance]  
A set of generators for a permutation group $\group$
\item[Output]  $\ZCycleIndex(\group;\ciparam)$.
\end{description}

We showed \cite[Theorem 4]{CIpaper} that if $\ciparam$ is not an integer then there is no
FPRAS for $\CycleIndex(\ciparam)$ unless $\RP=\NP$.
In fact, it is NP-hard to approximate
$\ZCycleIndex(\group;\ciparam)$ within any polynomial factor.
However, our technique from \cite[Theorem 4]{CIpaper} does not say
anything about the difficulty of the problem in the more interesting case
when $\ciparam$ is an integer.
We raised this question in~\cite{CIpaper} but were unable to resolve it (until the present paper).
Note that it is easy to compute $\ZCycleIndex(\group;1)$ exactly in polynomial time.
Corollary~\ref{cor:we} has the following consequence.

\begin{corollary}
\label{cor:ci} 
Suppose that $\ciparam>0$ is efficiently approximable. Then
\begin{enumerate} 
\item \label{aa} If $\ciparam=1$ then $\CycleIndex(\ciparam)$ can be solved exactly in polynomial time.
\item \label{bb} If $\ciparam$ is not an integer then there is no FPRAS for $\CycleIndex(\ciparam)$ 
unless $\NP=\RP$.
\item \label{cc} If $\ciparam>1$  is a positive integer  then
$\BIS \APred \CycleIndex(\ciparam)$ .
\end{enumerate}
\end{corollary}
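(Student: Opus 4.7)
Parts~(\ref{aa}) and~(\ref{bb}) should go through quickly: for~(\ref{aa}), $\ZCycleIndex(\group;1)=1$ for every~$\group$, and~(\ref{bb}) is essentially \cite[Theorem 4]{CIpaper}. The main work is part~(\ref{cc}), which I plan to obtain by establishing $\WE(1/\ciparam)\APred\CycleIndex(\ciparam)$ and then chaining with item~(4) of Corollary~\ref{cor:we}; this is legitimate because $\ciparam>1$ is an integer, so $1/\ciparam\in(0,1)$.

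The plan for the new reduction is to realise the binary linear code generated by the input matrix faithfully as a permutation group. Given a generating matrix~$M$ over~$\gf2$ with $r$ linearly independent rows and $n$~columns representing a code $C\subseteq\gf2^n$, I would let~$\group$ act on $\{1,\ldots,2n\}$, organised into pairs $(2i-1,2i)$ for $i=1,\ldots,n$. For each codeword~$c\in C$, let~$\perm_c$ be the permutation that swaps the $i$th pair when $c_i=1$ and fixes it otherwise. Since $c\mapsto\perm_c$ is an injective homomorphism from $(C,+)$ into $S_{2n}$, taking~$\group$ to be the group generated by the $r$ permutations associated with the rows of~$M$ gives $\group=\{\perm_c:c\in C\}$ with $|\group|=2^r$, and the list of~$r$ generators is easily produced in polynomial time. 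Counting cycles of~$\perm_c$ ($\|c\|$ transpositions together with $n-\|c\|$ fixed pairs) gives $\cycles(\perm_c)=\|c\|+2(n-\|c\|)=2n-\|c\|$, whence
$$\ZCycleIndex(\group;\ciparam)=\frac{1}{|C|}\sum_{c\in C}\ciparam^{\,2n-\|c\|}=\frac{\ciparam^{\,2n}}{|C|}\,W_M(1/\ciparam).$$

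To recover $W_M(1/\ciparam)$ from an approximation of $\ZCycleIndex(\group;\ciparam)$, one uses the exact value $|C|=2^r$ (obtained by Gaussian elimination) together with a sharp enough approximation of~$\ciparam^{2n}$, which the efficient approximability of~$\ciparam$ supplies. I do not anticipate a serious technical obstacle here: all quantities in sight are positive, relative errors compose, and both the oracle tolerance and the tolerance to which~$\ciparam^{2n}$ is approximated need only be polynomially small in the input size and target accuracy. Chaining $\WE(1/\ciparam)\APred\CycleIndex(\ciparam)$ with item~(4) of Corollary~\ref{cor:we} then yields $\BIS\APred\CycleIndex(\ciparam)$, completing part~(\ref{cc}).
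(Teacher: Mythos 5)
Your proposal is correct and follows essentially the same route as the paper: items~(\ref{aa}) and~(\ref{bb}) are dispatched via \cite[Theorem 4]{CIpaper} (and the trivial identity $\ZCycleIndex(\group;1)=1$), and item~(\ref{cc}) is obtained by chaining $\BIS\APred\WE(1/\ciparam)$ with a reduction $\WE(1/\ciparam)\APred\CycleIndex(\ciparam)$ built from exactly the paper's construction --- pairs $(2j-1,2j)$, one generator per row of~$M$, codeword~$c$ realised as the product of transpositions on its support, giving $\cycles(\perm_c)=2n-\|c\|$ and $|\group|\,\ZCycleIndex(\group;\ciparam)=\ciparam^{2n}W_M(1/\ciparam)$. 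Your homomorphism argument for $\group=\{\perm_c:c\in C\}$, $|\group|=2^r$ is just a cleaner packaging of the paper's commuting, order-two generator argument, so the two proofs coincide in substance.
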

\begin{proof}
 
Items (\ref{aa}) and (\ref{bb}) are from \cite[Theorem 4]{CIpaper}. We 
now prove item (\ref{cc}).
Let $\lambda = \ciparam^{-1}$.
Note that $\lambda \in (0,1)$.
We know from  Corollary~\ref{cor:we}
that $\BIS \APred  \WE(\lambda)$. To finish, we
show that  $\WE(\lambda) \APred \CycleIndex(\ciparam)$.
 
The reduction is straightforward.  Let $M$ be a generating matrix 
(an instance of $\WE(\lambda)$ with $r$ linearly independent rows and $c$~columns). 
Let~$\Upsilon$ be the subspace generated by the rows of~$M$ (this is the set of code words). 
Let $M_{i,*}$ denote row~$i$ of~$M$.
For each binary string $m\in \{0,1\}^r$,
let $w(m)$ be the vector $\sum_{i=1}^r m_i M_{i,*}$ (where arithmetic is over GF($2$))
and let $k(m)$ be the number of $1$'s in this vector.
Since the rows of~$M$ are linearly independent, 
each element of $\Upsilon$ is uniquely expressible as $w(m)$ with $m\in\{0,1\}^r$,
so 
\begin{equation} W_M(\lambda) = \sum_{m\in \{0,1\}^r} \lambda^{k(m)}.
\label{thisWM}
\end{equation}
 
 Let 
 $\points=2 c$. 
 Our objective will be to construct a group $\group$ 
 of permutations of $\{1,\ldots,\nu\}$ so that
 $W_M(\lambda)$ can be computed from $\ZCycleIndex(\group;\ciparam)$.
 For $i\in[r]$, let $g_i$ be the permutation of $\{1,\ldots,\nu\}$ 
defined as follows: For all $j\in[c]$, elements~$2j-1$ and $2j$ are mapped to each other by $g_i$ if
$M_{i,j}=1$ and each of these is mapped to itself by~$g_i$ if $M_{i,j}=0$. 
Let~$\group$ be the group of permutations of $\{1,\ldots,\nu\}$ generated by $g_1,\ldots,g_r$. 

For each binary string $m\in \{0,1\}^r$, let $g(m)$ be the
permutation $g_1^{m_1} \cdots g_r^{m_r}$,
where $g_i^{\ell}$ denotes the composition of $\ell$ copies of the generator~$g_i$
so $g_i^0$ is the identity permutation.
Note that, for each $j\in [c]$, elements $2j-1$ and $2j$ are swapped by $g(m)$
iff  
$\sum_{i=1}^r m_i M_{i,j}=1 \bmod 2$.
Thus, the number of $j$ for which there is a swap is $k(m)$ so
$\cycles(g(m)) = \nu - k(m)$.  

We will show that each permutation $g\in \group$ can be written as $g(m)$
for exactly one $m\in \{0,1\}^r$.
First, suppose that $g=g_{i_1} \cdots g_{i_\ell}$ for some $i_1,\ldots,i_\ell \in \{1,\ldots,r\}^\ell$.
Since the generators $g_1,\ldots,g_r$ commute, we can  
re-order so that
$i_1,\ldots,i_\ell$ are monotonically non-decreasing.
Then, since the generators all have order~$2$, we can 
cancel factors that are the identity permutation, making 
$i_1,\ldots,i_\ell$ distinct.
Thus, $g$ can be written as $g(m)$ for \emph{at least} one $m\in\{0,1\}^r$.
To see that $m$ is unique, suppose that
$g(m)=g(m')$ for $m\neq m'$.
Then $g_1^{m_1} \cdots g_r^{m_r} = g_1^{m'_1} \cdots g_r^{m'_r}$. 
Thus, for each $j\in [c]$,
the number of permutations in 
$\{ g_i \mid m_i=1\}$ which swap element $2j-1$ with element $2j$
has the same
parity as the number of permutations in
$\{ g_i \mid m'_i=1\}$ which swap element $2j-1$ with element $2j$.
Thus,
 $$\sum_{i=1}^r m_i M_{i,*} = \sum_{i=1}^r m'_i M_{i,*}\bmod 2$$
 Since the rows of $M$ are linearly independent, we conclude that $m=m'$.
Thus, we have proved that
$\group = \{g(m) \mid m \in \{0,1\}^r\}$
so, from the definition of $\ZCycleIndex(\group;\ciparam) $,
$$ 
\label{thisCycleIndex}
|\group| \> \ZCycleIndex(\group;\ciparam) = \sum_{m\in \{0,1\}^r} \ciparam^{\nu - k(m)}
= \ciparam^{\nu} \sum_{m\in \{0,1\}^r}   \lambda^{ k(m)}
= \ciparam^{\nu} W_M(\lambda),
$$
where the last equality uses Equation~(\ref{thisWM}).
\end{proof}

\bibliographystyle{plain}
\bibliography{mybibfile}
 
\end{document}